\documentclass[11pt]{article}
\usepackage{amsmath}
\usepackage{amsfonts,amsthm,amssymb}
\usepackage{amstext}
\usepackage{graphicx}
\usepackage{geometry}
\usepackage{enumitem}
\usepackage{color}
\usepackage{algorithm}
\usepackage{algpseudocode}
\usepackage{multirow}

\geometry{letterpaper,
          left       = 1.0in,
          right      = 1.0in,
          top        = 1.0in,
          bottom     = 1.0in}

\newcommand{\set}[1]{\left\{#1\right\}}

\newcommand{\bT}{{\mathbb{T}}}
\newcommand{\Real}{\mathbb{R}}
\newcommand{\Ze}{\mathbb{Z}}
\newcommand{\BFu}{u}
\newcommand{\BFs}{s}

\newtheorem{lemma}{Lemma}[section]
\newtheorem{theorem}[lemma]{Theorem}

\newtheorem{corollary}[lemma]{Corollary}
\newtheorem{proposition}[lemma]{Proposition}

\theoremstyle{definition}
\newtheorem{definition}[lemma]{Definition}
\newtheorem{remark}[lemma]{Remark}

\newcounter{property}
\newenvironment{properties}
{
\addtocounter{property}{1}
\begin{enumerate}[labelindent=0pt,label=(\Alph{property}\arabic*),itemindent=1em]
}
{
\end{enumerate}
}

\title{On the computational complexity of minimum-concave-cost flow in a two-dimensional grid}
\author{Shabbir Ahmed\thanks{H. Milton Stewart School of Industrial \& Systems Engineering, Georgia Institute of Technology, Atlanta GA 30332, USA. Email: sahmed@isye.gatech.edu.}, Qie He\thanks{Department of Industrial \& Systems Engineering, University of Minnesota, Minneapolis MN 55455, USA. Email: qhe@umn.edu.}, Shi Li\thanks{Department of Computer Science and Engineering, The State University of New York at Buffalo, Buffalo, NY 14260, USA. Email: shil@buffalo.edu.}, George L. Nemhauser\thanks{H. Milton Stewart School of Industrial \& Systems Engineering, Georgia Institute of Technology, Atlanta GA 30332, USA. Email: gnemhaus@isye.gatech.edu.}}
\date{}

\begin{document}
\maketitle
\begin{abstract}
We study the minimum-concave-cost flow problem on a two-dimensional grid. We characterize the computational complexity of this problem based on the number of rows and columns of the grid, the number of different capacities over all arcs, and the location of sources and sinks. The concave cost over each arc is assumed to be evaluated through an oracle machine, i.e., the oracle machine returns the cost over an arc in a single computational step, given the flow value and the arc index. We propose an algorithm whose running time is polynomial in the number of columns of the grid, for the following cases: (1) the grid has a constant number of rows, a constant number of different capacities over all arcs, and sources and sinks in at most two rows; (2) the grid has two rows and a constant number of different capacities over all arcs connecting rows; (3) the grid has a constant number of rows and all sources in one row, with infinite capacity over each arc. These are presumably the most general polynomially solvable cases, since we show the problem becomes NP-hard when any condition in these cases is removed. Our results apply to abundant variants and generalizations of the dynamic lot sizing model, and answer several questions raised in serial supply chain optimization.
\end{abstract}

\section{Introduction}
We are interested in the min-cost network flow problem over an $L$-by-$T$ grid $G$. The grid $G$ is an acyclic directed planar graph on vertices $V=\{v_{l,t} \mid l=1,\ldots, L; \; t =1, \ldots, T\}$ with arc set 
\begin{equation*}
A=\{(v_{l,t},v_{l+1,t})\mid l\le L-1\}\bigcup \{(v_{l,t},v_{l,t+1})\mid t \le T-1\}.
\end{equation*}
A drawing of $G$, embedded in $\Real^2$, is shown in Figure~\ref{fig:grid}. 
\begin{figure}[htb]
\centering
\includegraphics[scale=0.5]{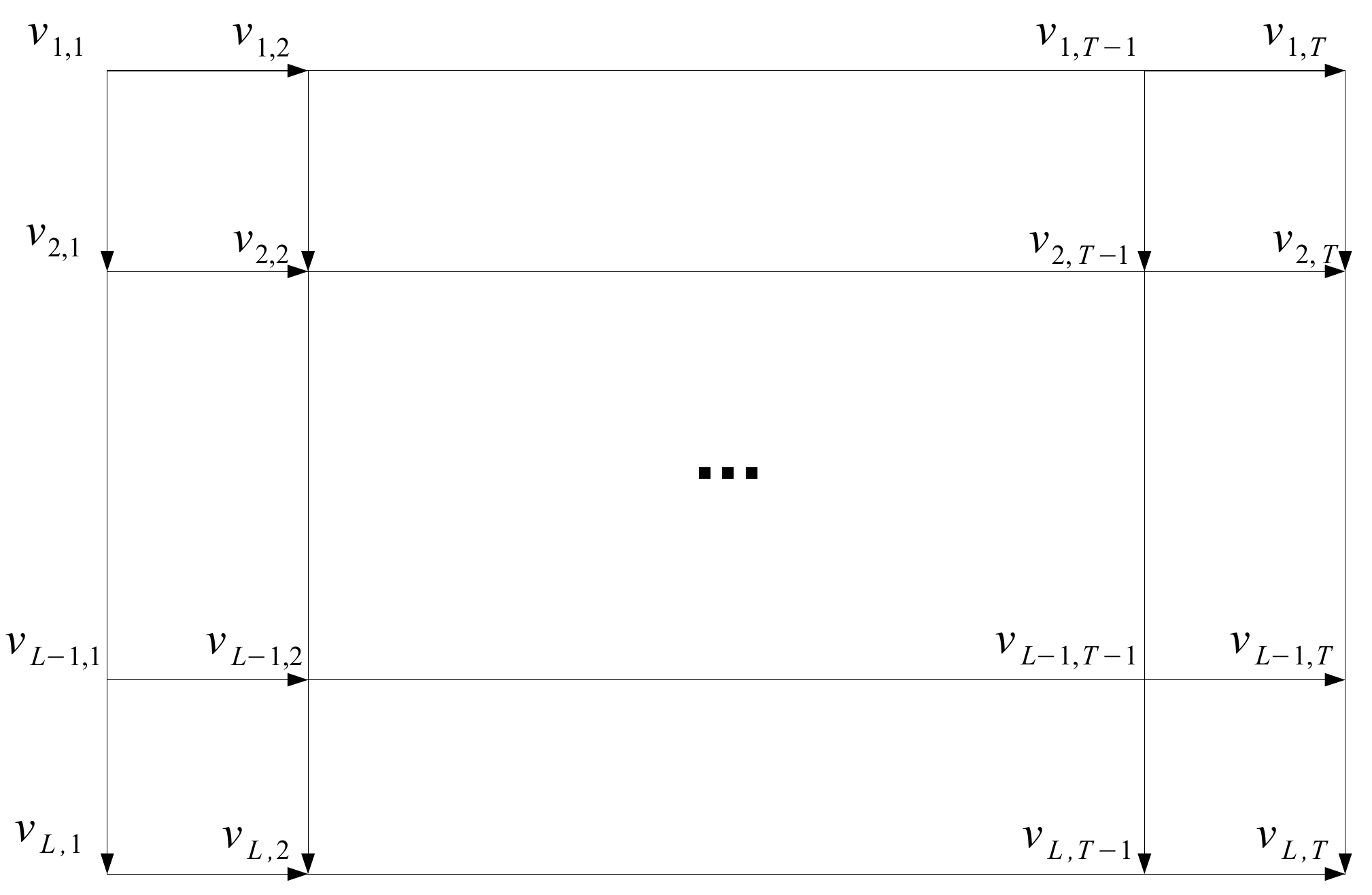}
\caption{The $L$-by-$T$ grid} 
\label{fig:grid}
\end{figure}
The grid $G$ has some additional parameters: a supply vector $b = (b_{v}) \in \Ze^{|V|}$, a cost function $c_a: \Real \rightarrow \Real$ for each arc $a\in A$, and an arc-capacity vector $U = (U_a) \in \overline{\Ze}_+^{|A|}$, where $\overline{\Ze}_+$ is the set of non-negative integers and infinity. We assume the net supply $\sum_{i\in V}b_v=0$. We call a vertex $v$ a source if $b_v>0$, a sink if $b_v<0$, or a transshipment vertex otherwise. We assume the function $c_a$ over each arc $a$ is represented by an oracle machine. Given the flow value $x$ over arc $a$, the oracle machine returns the cost $c_a(x)$ in a single computational step, so cost functions with explicit forms are special cases in our oracle model. 

The minimum-concave-cost flow problem in a two-dimensional grid (MFG) is to find a vector $x \in \Real^{|A|}$ to
\begin{equation} \label{eq:FlowConstraints}
\begin{array}{lll}
\mathrm{minimize} & \sum_{a\in A}{c_a(x_a)} &\\
\text{s.t.} &  \sum_{a\in \delta^+(v)}{x_a}-\sum_{a\in \delta^-(v)}{x_a} = b_v,& \forall v\in V, \\
						&  0\le x_a\le U_a, & \forall a \in A,
\end{array}
\end{equation}
where $c_a$ is concave for each $a\in A$, and $\delta^+(v)$ and $\delta^-(v)$ are the set of outgoing and incoming arcs of vertex $v$, respectively. A \emph{feasible flow} is a vector $x \in \Real^{|A|}$ that satisfies the constraints in~\eqref{eq:FlowConstraints}. 

Our interest in the computational complexity of the MFG is kindled by the classical dynamic uncapacitated lot sizing problem (ULSP), a building block in production planning and inventory control, studied by Wagner and Whitin~\cite{wagner1958dynamic}. The ULSP aims to find a min-cost production schedule to satisfy a given sequence of demands over $T$ periods, with no limit on production capacity. It is a special case of the MFG with two rows and one source~\cite{he2012minimum}, and can be solved by dynamic programming (DP) efficiently in $O(T\ln T)$ time~\cite{aggarwal1993improved,federgruen1991simple,wagelmans1992economic}. Many efforts have been spent on extending the ULSP to models under more practical settings, such as multiple production steps, demands at intermediate production steps, capacity on production or storage, and different production and storage costs which characterize the setup cost or economies of scale. All these variants can be transformed into the MFG with \emph{a single source} by adjusting the corresponding parameters in~\eqref{eq:FlowConstraints}, such as $L$, $U_a$, and the location of sinks~\cite{he2012minimum}. The computational complexity, however, varies across models. For example, the constant capacitated lot sizing problem, an extension of the ULSP with a uniform production capacity at all periods, can be solved in $O(T^3)$ time~\cite{van1996t3}; In contrast, the capacitated lot sizing problem, where the production capacities can be arbitrary across periods, is NP-hard. Let 
\[K=|\{U_a \mid U_a < +\infty, a\in A\}|\]
be the number of different arc capacity values of the grid. Our goal is to study how the network parameters, in particular \emph{the number of rows $L$, the number of arc capacity values $K$, and the location of sources and sinks}, affect the computational complexity of the MFG. In particular, we aim to identify the most general conditions under which the MFG can be efficiently solved.

Since the arc capacity plays such a significant role in the computational complexity of the MFG, we divide the MFG into two classes: the uncapacitated MFG (U-MFG), where each arc has infinite capacity or $K=0$, and the capacitated MFG (C-MFG), where \emph{at least} one arc has a finite capacity or $K\ge 1$. We summarize the complexity of the two classes in Tables~\ref{table:ccfg} and~\ref{table:ufg} below, with newly derived results in this paper highlighted in bold. 
\begin{table}[htb]
\caption{Complexity of the C-MFG with constant $K$}
\centering
\begin{tabular}{c||p{5cm}|p{5cm}}
\hline\hline
 & Sources and sinks in at most two rows & Sources and sinks in at least three rows\\
\hline
Constant $L$  & \textbf{Polynomially solvable}  & \textbf{NP-hard}\\
\hline
Varying $L$ & Open & \textbf{NP-hard} \\
\hline
\end{tabular}
\label{table:ccfg}
\end{table}

\begin{table}[htb]
\caption{Complexity of the U-MFG}
\centering
\begin{tabular}{ p{2cm} | p{4.5cm} | p{4.5cm}|p{3.5cm}}
\hline\hline
 \multicolumn{3}{c|}{Sources in one row} & Both sources and sinks in at least two rows\\
\hline
\multirow{3}{*}{Constant $L$} & Sinks in one row & Polynomially solvable~\cite{erickson1987send} & \multirow{4}{*}{\textbf{NP-hard}}\\
& Sinks in at most two rows & Polynomially solvable~\cite{he2012minimum} & \\
& Sinks in at most $L$ rows  &  \textbf{Polynomially solvable} &  \\
\hline
Varying $L$ & Sinks in at most $L$ rows & \textbf{NP-hard} & \textbf{NP-hard} \\
\hline
\end{tabular}
\label{table:ufg}
\end{table}

In Table~\ref{table:ccfg}, we only list the results for constant $K$, since with varying $K$ the capacitated lot sizing problem, a special case of the C-MFG, is already NP-hard.
As we see from the two tables, each condition for the polynomially solvable cases is crucial; any relaxation of these conditions renders the problem NP-hard. Our result answers several questions raised in the context of supply chain optimization. In~\cite{kaminsky2003production, van2005integrated}, polynomial-time algorithms were derived for a multi-stage serial supply chain design problem, in which capacity is only present at the manufacturing stage (the vertical arcs between the first and second rows in the MFG); the authors raised the question of whether an optimal design can be found efficiently, if capacity is imposed at other stages. Our result in Table~\ref{table:ccfg} indicates that the optimal design can always be found in polynomial time, as long as the number of different capacities is a constant, regardless of at which stages the capacity is imposed. Another open problem was raised in~\cite{he2012minimum} on whether the U-MFG with sources in one row and sinks in $L>2$ rows is polynomially solvable. In this paper, we give a positive answer to the problem when $L$ is an arbitrary positive integer, thus generalizing the result in~\cite{he2012minimum} from sinks in at most two rows to sinks in an arbitrary number of rows; we also complement the polynomially solvable case with a hardness result when $L$ is allowed to vary.

We should also mention the model of computation used in this paper. Since the cost over an arc in a feasible flow may be a real number, any algorithm for the MFG will involve computation over real numbers. We adopt the real Random Access Machine (RAM) introduced in~\cite{blum1989theory} as our model of computation; the real RAM model has registers each of which is capable of storing a real number with infinite precision, and each arithmetic operation on real numbers stored in these registers, such as adding or comparing two real numbers, takes a single computational step. The running time of our algorithm refers to the total number of arithmetic operations and oracle queries.

We now survey some results on minimum-concave-cost flow problem in the literature. The minimum-concave-cost flow problem over a general network can be shown to be NP-hard, proven by a reduction from the partition problem~\cite{guisewite1990minimum}. Approximation and exact algorithms based on DP or branch-and-bound were developed for various network topologies and cost functions; see the survey in~\cite{guisewite1990minimum} and~\cite{fontes2006dynamic}. Polynomially solvable special cases of the minimum-concave-cost flow problem include a single-source problem with a single nonlinear arc cost~\cite{guisewite1993polynomial}, the problem with a constant number of sources and nonlinear arc costs~\cite{tuy1995minimum}, a production-transportation network flow problem with concave costs defined on a constant number of variables~\cite{tuy1996strongly}, the pure remanufacturing problem~\cite{van2008four}, and many variants and extensions of the ULSP, which we elaborate in the next paragraph. We also point out here two relatively general classes of polynomially solvable minimum-concave-cost flow problem. Erickson et al.~\cite{erickson1987send} derived a DP algorithm that runs in polynomial time in a planar graph with sources and sinks lying on a constant number of faces of the graph; thus their algorithm runs in polynomial time for the U-MFG with sources in the first row and sinks in the last row, but runs in exponential time for the U-MFG with sources or sinks in more than one rows and the C-MFG in general. Recently, He et al.~\cite{he2012minimum} gave a polynomial-time DP for the U-MFG with a constant number of rows, sources in one row, and sinks in two other rows, and the U-MFG with one source and sinks in a constant number of rows.

The literature on the lot sizing problem is abundant; see the book by Pochet and Wolsey~\cite{pochet2006production}. To be concise, we focus our review on problems that can be formulated as an MFG. We also adopt the terminology in the lot sizing literature: \emph{a time period} refers to a column, and \emph{an echelon} or \emph{a stage} refers to a row in the MFG. For the uncapacitated case, the ULSP was first solved in $O(T^2)$ time by Wagner and Whitin~\cite{wagner1958dynamic}, and the complexity was later improved to $O(T\ln T)$~\cite{aggarwal1993improved,federgruen1991simple,wagelmans1992economic}; Zangwill gave an $O(LT^4)$-time DP for the multi-echelon ULSP with sinks at the last echelon~\cite{zangwill1969backlogging}; Love~\cite{love1972facilities} gave an $O(LT^3)$-time DP algorithm for the multi-echelon ULSP if the production costs are non-increasing over time periods, and the storage costs are non-decreasing over echelons; Zhang et al.~\cite{zhang2011polyhedral} recently proposed an $O(T^4)$-time DP for the two-echelon ULSP with two rows of sinks and fixed setup costs for production, and derived a new class of valid inequalities for the multi-echelon ULSP. For lot sizing problems with production or storage capacity, the capacitated lot sizing problem is NP-hard~\cite{bitran1982computational}, but the constant capacitated lot sizing problem can be solved in $O(T^3)$ time~\cite{florian1971deterministic,van1996t3}; the lot sizing problem with variable storage bounds and fixed-charge storage costs can be solved in $O(T^2)$ time~\cite{atamturk2005lot,atamturk2008n2}. 

Some extensions of the ULSP are presented in a more general context of supply chain optimization. Kaminsky and Simchi-Levi~\cite{kaminsky2003production} studied a two-stage ($L=3$ in our model) supply chain management problem with constant capacity at the first stage, and gave polynomial-time algorithms when the production and transportation costs have several different structures. Van Hoesel et al.~\cite{van2005integrated} extended this two-stage model to a multi-echelon serial supply chain model with general concave production and storage costs, with constant capacity at the first echelon and unlimited capacity at other echelons; they gave an $O(LT^{2L+3})$-time DP, and the complexity can be significantly reduced if the production and storage costs follow some special patterns. Their result was recently improved by Hwang et al.~\cite{hwang2013basis}, who developed an $O(LT^8)$-time algorithm by exploiting a structure called basis path in the optimal solutions. 

To the best of our knowledge, all the polynomially solvable capacitated lot sizing and serial supply chain models in the literature make the following two assumptions: (1) the finite capacity is restricted to the production echelon, and there is unlimited capacity at other echelons; (2) the capacity is uniform at all time periods. In our model, this means each vertical arc between the first and second rows of the grid has the same finite capacity, while other arcs have infinite capacity. These assumptions are rather restrictive from a practical point of view. Capacities arise naturally at other echelons of the supply chain, such as transportation between the manufacturer and distribution centers or between distribution centers and retailers, and capacities may vary across time periods as well. It was also pointed out in~\cite{kaminsky2003production,van2005integrated} that production planning models with more general capacity structure is a future research direction and their complexities were posed as open questions. Our result indicates that as long as the number of different capacities over all arcs is a constant, the problem can be solved in polynomial time, regardless of at which stages or upon which vertical or horizontal arcs the capacity is imposed.

The rest of the paper is organized as follows. In Section~\ref{sec:DP}, we reformulate the MFG as an optimal control problem over a dynamical system, propose a general algorithm to solve the problem, and highlight the key component that affects the complexity of the algorithm. In Section~\ref{sec:CFG}, we identify two conditions for the C-MFG to be polynomially solvable, and then present several NP-hard cases with any of the conditions relaxed. In Section~\ref{sec:UCFG}, we identify a general condition for the U-MFG to be polynomially solvable, and complement it with two NP-hard cases. Section~\ref{sec:extension} discusses extensions of the MFG to grids with additional arcs. Conclusion and some open problem are given in Section~\ref{sec:conclude}.

\section{Methodology} \label{sec:DP}
We introduce some notations and terminology first. Given an integer $N$, let $[N]$ denote the set $\{1,\ldots,N\}$. Let the set $P_F$ denote the \emph{flow polyhedron} defined by constraints in~\eqref{eq:FlowConstraints}. Let $b_{l,t}$ denote the supply of vertex $v_{l,t} \in V$. We call any arc $(v_{l,t},v_{l,t+1})$ with $l\in [L]$ and $t\in [T-1]$ a \emph{forward arc}, and any arc $(v_{l,t}, v_{l+1,t})$ with $l\in [L-1]$ and $t\in [T]$ a \emph{downward arc}. Given a tree $\bT=(V_{\bT}, A_{\bT})$ and a vertex $v$, for simplicity we sometimes write $v\in \bT$ instead of $v\in V_{\bT}$ to denote that $v$ is a vertex in $\bT$. We use $G$ to denote both the grid as well as its planar embedding in Figure~\ref{fig:grid}. The outer face of $G$ refers to the unbounded region outside the rectangle with vertices $v_{1,1}, v_{1,T}, v_{L,1}$, and $v_{L,T}$ in Figure~\ref{fig:grid}.

\subsection{Problem reformulation and the algorithm}
We always assume the given MFG instance is feasible, since its feasibility can be checked in the same way as in a min-linear-cost flow problem, by solving a maximum flow problem~\cite{ahuja1993network}. In order to describe our algorithm, we first reformulate the MFG as an optimal control problem for a discrete-time linear dynamical system with concave costs. The elements of the dynamical system are as follows.
\begin{enumerate}
	\item Decision stages. There are $T+1$ stages, corresponding to column $t=1,\ldots,T$ in the grid and a dummy stage 0 in the beginning.	 
	\item States. The state $\BFs^t$ at stage $t\in [T-1]$ is an $L$-dimensional vector whose component $s^t_l$ denotes the flow over the forward arc $(v_{l,t},v_{l,t+1})$, for $l\in [L]$. Each of the states $\BFs^0$ and $\BFs^T$ is an $L$-dimensional zero vector. In addition, the dimension of $\BFs^t$ for $t\in [T-1]$ can be reduced by one, since $\sum_{l\in [L]} s^t_l= \sum_{i\in [t]}\sum_{l\in [L]}b_{l,i}$ according to the flow balance constraints. 	
	\item Decision variables (actions, controls, or inputs). The decision variable $\BFu^t$ at stage $t\in [T]$ is an $(L-1)$-dimensional vector whose component $u^t_l$ is the flow over the downward arc $(v_{l,t+1},v_{l+1,t+1})$, for $l\in [L-1]$.	
	\item The system equations. The state $\BFs^{t+1}$ at stage $t+1$ can be easily determined by $\BFs^t$ and $\BFu^t$ through the flow balance constraints for vertices $v_{1,t+1}, \ldots, v_{L,t+1}$. For simplicity, we assume the system equation at stage $t$ is $\BFs^{t+1}=H_t(\BFs^t,\BFu^t)$, where $H_t$ is the affine function representing the flow balance constraints for vertices in column $t+1$.
	\item The cost. The cost at stage $t\in [T]$ is the sum of costs incurred by the downward and forward arcs related to that stage. In particular, the cost is $\sum_{l \in [L]} c^f_{l,t}(s^t_l) + \sum_{l\in [L-1]} c^d_{l,t}(u^t_l)$, where $c^f_{l,t}$ is the cost function over forward arc $(v_{l,t}, v_{l,t+1})$, and $c^d_{l,t}$ is the cost function over downward arc $(v_{l,t},v_{l+1,t})$.
\end{enumerate}
The above dynamical system is different from the system in~\cite{he2012minimum}, where the stages are the anti-diagonal lines of the grid and each component of the state represents the inflow of some vertex. The current dynamical system has the advantage of being easily extended to a grid with horizontal backward arcs and vertical upward arcs. Basically, we only need to augment the state vector and decision variables and change the system equations and costs accordingly. We will discuss several extensions of the MFG in detail in Section~\ref{sec:extension}

The challenge of designing the optimal control over such a dynamical system comes from the uncountable state space and the concave cost structure. The first simplification we can do is to reduce the state space of the system to a finite set. Then the optimal control problem is equivalent to a shortest path problem over an acyclic graph~\cite{bertsekas1996dynamic}. The simplification is based on the following observation: since the MFG involves minimizing a concave function over the flow polyhedron $P_F$, so the optimum must be attained at one of its extreme points~\cite{rockafellar2015convex}. Thus it suffices to consider only the states corresponding to those extreme points, and the number of extreme points for any given polyhedron is always finite. We now describe below the algorithm for the MFG.

\begin{algorithm}
\caption{The algorithm for the MFG}
\label{alg:mfg}
\begin{enumerate}
	\item Enumerate all the possible values of $\BFs^t$ corresponding to the extreme points of $P_F$, for each $t\in [T-1]$.
	\item Construct a $(T+1)$-partite directed graph $G'=(V', A')$. The vertex set $V'=\cup_{t=0}^T V'_t$, where each vertex in $V'_t$ represents a possible value of state $s^t$ (the set $V'_0$ and $V'_{T}$ contain exactly one vertex, respectively, since both $\BFs^0$ and $\BFs^T$ equal the zero vector); each arc in $A'$ goes from one vertex in $V'_{t}$ to one vertex in $V'_{t+1}$ for $t=0,\ldots,T-1$. Suppose vertex $i$ in $V'_t$ represents a state $s^{t,i}$ at stage $t$, vertex $j$ in $V'_{t+1}$ represents a state $s^{t+1,j}$ at stage $t+1$, and $u^{t,ij} \in \Real^{L-1}_+$ is the corresponding decision variable computed through the system equations $s^{t+1,j} = H_t(s^{t,i}, u^{t,ij})$, then the cost of arc $(i,j)$ in $G$ requires $2L-1$ oracle queries and is calculated as follows:
\begin{equation*}
c_{ij} =  \sum_{l\in [L]} c^f_{l,t}(s^{t,i}_l) + \sum_{l\in [L-1]} c^d_{l,t}(u^{t,ij}_l),
\end{equation*}
if each component of $u^{t,ij}$ satisfies the corresponding arc capacity constraint, and $c_{ij}=\infty$ otherwise. 
	\item Find a shortest path $P$ from the vertex representing $\BFs^0$ to the vertex representing $\BFs^T$ in $G'$. 
	\item Recover a flow of the MFG from the shortest path $P$, by solving the system equations $\BFs^{t+1} = H_t(\BFs^{t}, \BFu^t)$ with $\BFs^t$ being the value corresponding to the vertex in $V'_t$ in $P$, for $t=0,\ldots,T-1$.
\end{enumerate}
\end{algorithm}

\begin{theorem} \label{prop:runtime}
Algorithm~\ref{alg:mfg} solves the MFG in $O(LTM^{2L-2})$ time, where $M$ is the maximum number of flow values a forward arc can attain in all extreme points of $P_F$.
\end{theorem}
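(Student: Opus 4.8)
The plan is to prove two things in turn: that Algorithm~\ref{alg:mfg} outputs an optimal flow, and that it runs within the stated bound. I would build the correctness proof on a correspondence between finite-cost paths from $\BFs^0$ to $\BFs^T$ in the auxiliary graph $G'$ and feasible flows of the MFG, together with the fact that at least one optimal flow is captured by this correspondence; the running-time proof is then a matter of counting the vertices and arcs of $G'$ and the work done per arc.

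For correctness, I would first record that $P_F$ is in fact a polytope: since $G$ is acyclic, every feasible flow decomposes into paths from sources to sinks, so each arc carries flow at most $B := \sum_{v:\,b_v>0} b_v$, whence $P_F \subseteq \prod_{a\in A}[0,\,\min\{U_a,B\}]$ is bounded. Minimizing the concave objective over $P_F$ therefore attains its minimum at an extreme point $x^\star$, and reading off the forward-arc flows of $x^\star$ in each column $t$ gives state vectors $\BFs^{\star,t}$ that lie in the sets $V'_t$ enumerated in Step~1. Next I would verify that two consecutive states determine the intervening downward-arc flows: with $\BFs^t$ and $\BFs^{t+1}$ fixed, the flow-balance equations of the corresponding column form a triangular linear system for $\BFu^t$ (each $u^t_l$ is an affine function of $\BFs^t,\BFs^{t+1},b$ and $u^t_{l-1}$, started from $u^t_0=0$), so $\BFu^t$ is uniquely reconstructed, and the last balance equation is automatically consistent because, for every member of $V'_t$, the sum $\sum_{l\in[L]}s^t_l=\sum_{i\in[t]}\sum_{l\in[L]}b_{l,i}$ depends only on $t$. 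Hence $\BFs^{\star,0},\dots,\BFs^{\star,T}$ traces a path in $G'$ whose reconstructed downward-arc flows coincide with those of $x^\star$; these obey the capacity bounds since $x^\star$ is feasible, so every arc on the path has finite cost and the path cost equals $\sum_{a\in A}c_a(x^\star_a)$. Conversely, I would check that any finite-cost path $\BFs^0\to\cdots\to\BFs^T$ yields, by the same reconstruction, a flow meeting every flow-balance constraint with all arc flows within their capacities (forward arcs because the states come from feasible extreme points, downward arcs because a finite cost forces the capacity test of Step~2 to pass), hence a feasible flow whose objective equals the path cost. Combining the two directions, the shortest path from $\BFs^0$ to $\BFs^T$ in $G'$ has cost exactly $\min_{x\in P_F}\sum_{a\in A}c_a(x_a)$, and Step~4 returns a flow attaining it.

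For the running time, the decisive point is a dimension count. A state $\BFs^t$ has $L$ coordinates, each the flow of a forward arc in column $t$ and so ranging over at most $M$ values across extreme points of $P_F$; but the identity $\sum_{l\in[L]}s^t_l=\sum_{i\in[t]}\sum_{l\in[L]}b_{l,i}$ makes the last coordinate a function of the other $L-1$, so $|V'_t|\le M^{L-1}$ (and $|V'_0|=|V'_T|=1$). Arcs of $G'$ run only from $V'_t$ to $V'_{t+1}$, hence $|A'|\le\sum_{t=0}^{T-1}|V'_t|\,|V'_{t+1}|\le T\,M^{2(L-1)}$. Computing one arc cost means solving the triangular system for $\BFu^t$, testing $L-1$ capacity constraints, and issuing $2L-1$ oracle queries, i.e. $O(L)$ work, so Step~2 costs $O(L\,T\,M^{2L-2})$ overall. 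Since $G'$ is a DAG already laid out by stage, the shortest path of Step~3 is obtained in one sweep in $O(|V'|+|A'|)=O(T\,M^{2L-2})$ time; Step~1 lists the at most $TM^{L-1}$ candidate states in $O(LTM^{L-1})$ time and Step~4 recovers the flow in $O(LT)$ time, both lower-order. Summing gives $O(L\,T\,M^{2L-2})$.

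The step I expect to demand the most care is the correctness argument, and within it the exact correspondence between consecutive states and the intervening controls via the affine maps $H_t$, including the check that the reconstruction is well defined on every pair in $V'_t\times V'_{t+1}$, not merely on pairs coming from a common extreme point. The reduction of the per-column state count from $M^L$ to $M^{L-1}$ using the fixed coordinate sum is what produces the exponent $2L-2$ rather than $2L$, so it deserves to be stated explicitly. Finally, the concrete realization of Step~1, namely listing for each forward arc a set of at most $M$ candidate flow values that contains those attained at extreme points of $P_F$, and the resulting bound on $M$, hinge on the structure of $P_F$ in the regime at hand; for the present statement it suffices that such sets exist, with the explicit constructions carried out case by case in Sections~\ref{sec:CFG} and~\ref{sec:UCFG}.
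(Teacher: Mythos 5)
Your proposal is correct and follows essentially the same route as the paper: a bijective correspondence between finite-cost $\BFs^0$-to-$\BFs^T$ paths in $G'$ and feasible flows (with the optimum attained at an extreme point and hence captured), followed by counting $O(M^{L-1})$ states per stage via the coordinate-sum identity, $O(TM^{2L-2})$ arcs, and $O(L)$ work per arc. The only difference is that you spell out details the paper leaves implicit (boundedness of $P_F$, the triangular reconstruction of $\BFu^t$ and its consistency, and well-definedness on arbitrary state pairs), which strengthens rather than changes the argument.
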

\begin{proof}
Any extreme point of $P_F$ corresponds to a feasible path from the vertex representing $\BFs^0$ to the vertex representing $\BFs^T$ in $G'$, whose length equals the cost of that extreme point; conversely, the shortest path $P$ found by Algorithm~\ref{alg:mfg} corresponds to a feasible flow in the MFG whose cost is equal to the length of the shortest path. Thus the flow recovered from $P$ is an optimal flow for the MFG.

Next we analyze the running time of Algorithm~\ref{alg:mfg}. Suppose $M$ is the maximum number of flow values a forward arc can attain in all extreme points of $P_F$. Since state $\BFs^t$ is an $(L-1)$-dimensional vector with each component being the flow over some forward arc, the number of states at stage $t$ in all extreme points of $P_F$ is $O(M^{L-1})$, and these states can be enumerated in $O(M^{L-1})$ time as well. Then the graph $G'$ has $O(TM^{L-1})$ vertices, and the number of arcs is
\[O(M^{L-1}) + \underbrace{O(M^{2L-2}) + \ldots + O(M^{2L-2})}_{T-2 \text{ terms}} + O(M^{L-1})=O(TM^{2L-2}),\]
where each term in the summation above is the number of arcs between $V'_{t-1}$ and $V'_{t}$ for $t \in [T]$. The cost of each arc can be evaluated in $O(L)$ time with $O(L)$ oracle queries, so the construction of graph $G'$ takes $O(LTM^{2L-2})$ time including $O(LTM^{2L-2})$ oracle queries. Since graph $G'$ is a directed acyclic graph, the shortest path can be found efficiently in $\Theta(|V'|+|A'|) = O(TM^{2L-2})$ time~\cite{cormen2009}. It takes $O(LT)$ time to recover of the flow from the found shortest path. The overall running of Algorithm~\ref{alg:mfg} is $O(LTM^{2L-2})$.
\end{proof}

\noindent We should mention that although each component of the state corresponds to a flow value in some extreme point of $P_F$, the optimal flow output by Algorithm~\ref{alg:mfg} is not necessarily an extreme point, since it is possible for the free arcs in an optimal flow to contain an undirected cycle, unless the optimum is unique. 

The remaining task, the key challenge for the MFG, is to find out all possible flow values over a forward arc in all extreme points of $P_F$, and derive a bound on the value $M$.

\subsection{Characterization of the extreme point of $P_F$}
We first introduce some terminology related to flow and extreme points of $P_F$, adopted from~\cite{ahuja1993network}. Given a feasible flow $f$, we call an arc $a$ a \emph{free arc} if $0< f_a < U_a$ and a \emph{restricted arc} if $f_a=0$ or $f_a=U_a$. 

\begin{definition}~\cite{ahuja1993network}
A feasible flow $f$ in graph $G$ is a \emph{cycle free solution}, if $G$ contains no undirected cycle composed only of free arcs. 
\end{definition} 

\begin{definition}~\cite{ahuja1993network}
A feasible flow $f$ and an associated spanning tree in $G$ is a \emph{spanning tree solution}, if every nontree arc in $G$ is a restricted arc for $f$. 
\end{definition}
\noindent Note that in a spanning tree solution, the tree arcs can be free or restricted. A tight connection exists between the extreme point of $P_F$ and a cycle free solution (spanning tree solution).
 
\begin{proposition} [Chapter 7.3 in~\cite{bertsimas1997introduction}] \label{prop:cyclefree}
A feasible flow is an extreme point of $P_F$ if and only if it is a cycle free solution.
\end{proposition}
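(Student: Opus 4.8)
The plan is to prove both implications by the standard perturbation argument, working with the box constraints $0\le x_a\le U_a$ and the homogeneous flow-balance system.

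For the direction ``extreme point $\Rightarrow$ cycle free,'' I would argue the contrapositive. Suppose $f$ is feasible but not cycle free, so $G$ contains an undirected cycle $C$ every arc of which is a free arc of $f$. Fix a traversal direction around $C$ and define $d\in\Real^{|A|}$ by $d_a=+1$ if $a$ is traversed in its own direction, $d_a=-1$ if traversed against it, and $d_a=0$ otherwise. At each vertex of $C$ the two incident cycle arcs contribute with opposite signs to the net flow, so $d$ satisfies $\sum_{a\in\delta^+(v)}d_a-\sum_{a\in\delta^-(v)}d_a=0$ for all $v$; that is, $f\pm\varepsilon d$ satisfies the flow-balance equations for every $\varepsilon$. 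Since each arc of $C$ is free, i.e.\ $0<f_a<U_a$, there is an $\varepsilon>0$ small enough that $0\le f_a\pm\varepsilon d_a\le U_a$ for all $a$. Then $f+\varepsilon d$ and $f-\varepsilon d$ are distinct feasible flows with midpoint $f$, so $f$ is not an extreme point of $P_F$.

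For the direction ``cycle free $\Rightarrow$ extreme point,'' suppose $f$ is a cycle free solution but, for contradiction, $f=\tfrac12(f^1+f^2)$ for distinct feasible flows $f^1,f^2$. On any arc $a$ with $f_a=0$ the bounds force $f^1_a=f^2_a=0$, and on any arc with $f_a=U_a$ they force $f^1_a=f^2_a=U_a$; hence every arc on which $f^1$ and $f^2$ differ is a free arc of $f$. The difference $d:=f^1-f^2$ is nonzero and, since $f^1$ and $f^2$ satisfy the flow-balance equations with the same supply vector $b$, $d$ lies in the circulation space (homogeneous flow balance). Now invoke the structural fact that the support of any nonzero circulation contains an undirected cycle. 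That cycle uses only arcs where $f^1\ne f^2$, hence only free arcs of $f$, contradicting cycle-freeness of $f$.

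The only step that is not pure bookkeeping is the structural lemma used in the second direction — that a nonzero element of the circulation space has an undirected cycle in its support — together with being careful about signs so that the vector $d$ built from a cycle really lies in the circulation space. Both are classical; I would state the cycle-support lemma and cite a network-flows reference such as~\cite{ahuja1993network} or~\cite{bertsimas1997introduction} rather than reprove it, and keep the $\varepsilon$-perturbation estimates at the level of ``small enough.''
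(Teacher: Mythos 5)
Your proof is correct and is exactly the standard perturbation argument (cycle direction vector for one implication, difference-of-flows circulation plus the cycle-in-support lemma for the other) that the paper itself does not reproduce but simply cites from Chapter~7.3 of~\cite{bertsimas1997introduction}. Nothing is missing; the two lemmas you flag (a nonzero circulation's support contains an undirected cycle, and the cycle vector lies in the circulation space) are indeed classical and safe to cite.
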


\begin{proposition} \label{prop:tree}
Given an extreme point of $P_F$, we can construct a spanning tree solution associated with this extreme point, and the resulting spanning tree solution may not be unique.
\end{proposition}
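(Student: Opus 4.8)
The plan is to start from the free-arc structure guaranteed by Proposition~\ref{prop:cyclefree} and then complete it to a spanning tree by the standard greedy (matroid) argument, which is possible because the grid is connected.

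Let $f$ be the given extreme point of $P_F$. By Proposition~\ref{prop:cyclefree}, $f$ is a cycle-free solution, so the set $F$ of free arcs of $f$, regarded as an undirected subgraph of $G$, contains no cycle; that is, $F$ is a forest. Now view $G$ as an undirected graph: it is the $L$-by-$T$ grid graph and hence connected. Therefore the forest $F$ extends to a spanning tree of $G$: scan the arcs of $A\setminus F$ in any fixed order and add an arc to the current subgraph whenever doing so does not create an undirected cycle. By the usual greedy/matroid reasoning this process terminates with a spanning subgraph $\bT=(V,A_{\bT})$ that is a tree and satisfies $F\subseteq A_{\bT}$.

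It remains to verify that $(f,\bT)$ is a spanning tree solution associated with the extreme point $f$, i.e. that every nontree arc of $G$ is restricted for $f$. Indeed, if some arc $a\notin A_{\bT}$ were a free arc of $f$, then $a\in F\subseteq A_{\bT}$, a contradiction; hence $f_a\in\{0,U_a\}$ for every $a\notin A_{\bT}$, which is exactly the required condition. Since the flow kept in this construction is $f$ itself, the resulting spanning tree solution is associated with the chosen extreme point.

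Finally, to see that the spanning tree solution need not be unique, it suffices to exhibit one extreme point for which the completion of $F$ to a spanning tree involves a genuine choice. Whenever $0\in P_F$ --- for instance when every $b_v=0$ --- the zero flow is an extreme point with $F=\emptyset$, so any spanning tree of the grid yields an associated spanning tree solution; and for $L,T\ge 2$ the grid graph contains undirected cycles and hence has more than one spanning tree. More generally, any extreme point whose free-arc forest has fewer than $|V|-1$ arcs admits several completions because $G$ is not a tree. I do not expect a genuine obstacle here: the only points requiring care are reading ``spanning tree'' in the undirected sense, confirming that $G$ is connected, and noting that the greedy completion never needs to insert a free arc --- which is immediate, since all free arcs already lie in $F$.
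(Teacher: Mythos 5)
Your proof is correct and follows essentially the same route as the paper: take the free arcs of the extreme point (a forest by Proposition~\ref{prop:cyclefree}), greedily extend to a spanning tree of the connected grid, and observe that non-tree arcs are automatically restricted, with non-uniqueness arising whenever the forest is not already spanning. The only nitpick is that your general claim that \emph{every} non-spanning free-arc forest admits several completions is not quite right (a completion can be forced even in a graph with cycles), but your concrete zero-flow example already establishes the ``may not be unique'' part, so nothing is lost.
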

\begin{proof}
Consider an extreme $f$ in graph $G=(V,A)$. From Proposition~\ref{prop:cyclefree}, $f$ is a cycle free solution. Let $A_f$ be the set of free arcs, then $(V, A_f)$ is a forest. If the forest is a spanning tree, then $f$ and this spanning tree forms a spanning tree solution. Otherwise, we add restricted arcs not in $A_f$ into the forest one at a time, in a way that no undirected cycle is created. In the end, a spanning tree will be produced, and $f$ with the produced spanning tree gives a spanning tree solution. When $(V,A_f)$ is not a spanning tree, the produced spanning tree may contain different restricted arcs, so the spanning tree solution is not unique.
\end{proof}

We now give a formula to compute the flow over any arc in an extreme point $f$. We first create a spanning tree solution based on Proposition~\ref{prop:tree}. Call the associated spanning tree $\bT_f$. Consider an arbitrary arc $a=(u, v)$ in $G$. If $a$ is not in $\bT_f$, then the flow over $a$ is zero or $U_a$. If $a$ is in $\bT_f$, removing $a$ from $\bT_f$ breaks $\bT_f$ into two connected components. Call the two components sub-trees $\bT_{f,1}$ and $\bT_{f,2}$, respectively. Without loss of generality, assume that one endpoint $u$ is in $\bT_{f,1}$ and another endpoint $v$ is in $\bT_{f,2}$. By the flow balance constraints, we have 
\begin{equation}\label{eq:flow:a}
f_a = \sum_{v_{l, t} \in \bT_{f,1}}b_{l, t} + \sum_{e\in A_2} f_e - \sum_{e \in A_1}f_e,
\end{equation}
where $A_1$ is the set of nontree arcs with tails in $\bT_{f,1}$ and heads in $\bT_{f,2}$, and $A_2$ is the set of nontree arcs with tails in $\bT_{f,2}$ and heads in $\bT_{f,1}$. 

In the rest of the paper, our main focus is to use~\eqref{eq:flow:a} to enumerate values of $f_a$ in all extreme points $f$ of $P_F$. We will present the analysis for the C-MFG first, and then for the U-MFG. We introduce a crucial lemma that will be used in both cases. Below a ``path'' could refer to either an undirected path or a directed path, when the context is clear. We say two paths are \emph{vertex-disjoint} if the two paths have no vertex in common. The lemma relies on the planar embedding of the grid $G$.

\begin{lemma} \label{lem:intersect}
Let $v^1,v^2,v^3$ and $v^4$ be four distinct vertices lying clockwise on the boundary of the outer face of a plane graph. Suppose $P_1$ is a path between $v^1$ and $v^3$ and $P_2$ is a path between $v^2$ and $v^4$. Then $P_1$ and $P_2$ cannot be vertex-disjoint.
\end{lemma}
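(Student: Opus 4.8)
The plan is to argue by contradiction via the Jordan curve theorem, after enlarging the embedding by one vertex. Suppose $P_1$ and $P_2$ are vertex-disjoint. Since $v^1,v^2,v^3,v^4$ are distinct, $v^1$ and $v^3$ are the endpoints of $P_1$, and $v^2,v^4$ are the endpoints of $P_2$, the disjointness assumption forces $P_2$ to contain neither $v^1$ nor $v^3$, and symmetrically $P_1$ to contain neither $v^2$ nor $v^4$. (If the ``paths'' are allowed to be walks, first replace each by a simple subpath with the same endpoints; this only makes disjointness easier to use.)

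First I would place a new point $w$ in the outer face and, using the fact that $v^1,v^2,v^3,v^4$ occur in this cyclic order along the boundary of the outer face, draw four arcs from $w$ to $v^1,v^2,v^3,v^4$ lying in the outer face, pairwise disjoint except at $w$, each meeting the original graph only at its endpoint $v^i$. This produces a plane graph $G^+$ in which the rotation at $w$ lists the edges $wv^1,wv^2,wv^3,wv^4$ in this cyclic order. Next, form the closed curve $C_1$ by concatenating $P_1$ with the arcs $v^3w$ and $wv^1$. Because $P_1$ is a simple $v^1$--$v^3$ path and the two new arcs touch the graph only at $v^1,v^3,w$, the curve $C_1$ is a Jordan curve, hence splits the plane into an interior and an exterior region.

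The heart of the argument is to show $v^2$ and $v^4$ lie in different regions determined by $C_1$. Near $w$, the curve $C_1$ consists of the initial segments of the edges $wv^1$ and $wv^3$, which divide a small disk around $w$ into two sectors; by the cyclic order $v^1,v^2,v^3,v^4$ at $w$, the edge $wv^2$ leaves $w$ into one sector and $wv^4$ into the other, i.e.\ one of them starts into the interior of $C_1$ and the other into the exterior. Since each of the arcs $wv^2$ and $wv^4$ meets $C_1$ only at $w$ (their only graph-vertices are $w$ and $v^2$, resp.\ $v^4$, and $v^2,v^4\notin C_1$), each such arc remains on the side it started on, so $v^2$ and $v^4$ lie on opposite sides of $C_1$. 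Now $P_2$ is a connected set joining $v^2$ to $v^4$, so by the Jordan curve theorem it must intersect $C_1$. But $C_1=P_1\cup(\text{arc }wv^1)\cup(\text{arc }wv^3)$, and $P_2$, being contained in the original graph, is disjoint from the outer face and can meet the arcs $wv^1,wv^3$ only at $v^1,v^3$, which are not on $P_2$; hence $P_2$ meets $P_1$. Since two edges of a plane graph can share only an endpoint, $P_1$ and $P_2$ share a vertex, contradicting vertex-disjointness.

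I expect the only real obstacle to be topological bookkeeping: justifying that the four arcs can be added without new crossings, that $C_1$ is genuinely a simple closed curve, and that ``an arc leaving $w$ into the interior and not re-meeting $C_1$ stays in the interior.'' These are standard facts about plane graphs; an alternative to the added-vertex trick is to close $P_1$ by an arc routed inside the outer face that ``hugs'' the boundary arc from $v^1$ to $v^3$ passing $v^2$, which again separates $v^2$ from $v^4$. Either way, the combinatorial content is just the clockwise order of the four boundary vertices, and everything else is routine planarity.
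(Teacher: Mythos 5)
The paper does not actually prove Lemma~\ref{lem:intersect}: it is stated as a standard consequence of planarity (``the lemma relies on the planar embedding of the grid $G$'') and then used as a black box, so there is no in-paper argument to compare against. Your proposal is a correct and complete proof. The added apex $w$ in the outer face, the Jordan curve $C_1=P_1\cup wv^3\cup wv^1$, and the local sector analysis at $w$ showing that $v^2$ and $v^4$ land on opposite sides of $C_1$ together give exactly the separation needed, and you correctly handle the two points where the argument could leak: that $v^2,v^4\notin P_1$ and $v^1,v^3\notin P_2$ follow from vertex-disjointness of the paths together with distinctness of the four vertices, and that $P_2$ can meet the auxiliary arcs only at $v^1,v^3$ because those arcs lie in the open outer face except at their graph endpoints. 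The one step you should make explicit if you write this up formally is the translation from ``$v^1,v^2,v^3,v^4$ lie clockwise on the boundary of the outer face'' to ``the rotation of the four new edges at $w$ is $v^1,v^2,v^3,v^4$''; this is immediate when the outer boundary is a simple closed curve, which is the case for the grid $G$ and all the subgrids to which the paper applies the lemma, but it is where the clockwise hypothesis actually enters. Your alternative closing arc that hugs the boundary from $v^1$ to $v^3$ past $v^2$ would work equally well and avoids adding a vertex.
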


\section{The C-MFG} \label{sec:CFG}
We first present two set of conditions for the C-MFG to be polynomially solvable, and then complement that with several NP-hard cases when any of the condition is removed.
\subsection{Polynomially solvable cases}
\begin{theorem} \label{thm:ccfgl_p}
The C-MFG can be solved in $O(L^{4KL-4K+1}T^{4KL+4L-4K-3})$ time, if sources and sinks are in at most two rows.
\end{theorem}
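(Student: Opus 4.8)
The plan is to bound $M$, the maximum number of distinct flow values a forward arc can take over all extreme points of $P_F$, by a polynomial in $T$ (treating $L$ and $K$ as constants), and then invoke Theorem~\ref{prop:runtime}. By Proposition~\ref{prop:tree}, fix an extreme point $f$ with an associated spanning tree $\bT_f$ and use formula~\eqref{eq:flow:a} to express $f_a$ for a forward arc $a$. The right-hand side of~\eqref{eq:flow:a} has three pieces: the partial supply sum $\sum_{v_{l,t}\in \bT_{f,1}} b_{l,t}$, and the two correction terms $\sum_{e\in A_2} f_e - \sum_{e\in A_1} f_e$ over nontree arcs crossing the cut. Each nontree arc is restricted, so $f_e\in\{0,U_e\}$; the nontree arcs with $f_e = 0$ contribute nothing, so only nontree arcs carrying their capacity $U_e$ matter, and there are at most $K$ distinct finite capacity values in play. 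The heart of the argument is a structural claim: the cut induced by removing $a$ from $\bT_f$ can cross only a bounded number of ``capacity-carrying'' nontree arcs, and the partial supply sum can take only a bounded number of values.

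The key step is the structural analysis of how $\bT_f$ interacts with the planar grid, and this is where Lemma~\ref{lem:intersect} does the work. First I would argue that removing a forward arc $a=(v_{l,t},v_{l,t+1})$ from $\bT_f$ splits the grid's vertex set into a ``left-ish'' part $\bT_{f,1}$ and a ``right-ish'' part $\bT_{f,2}$, and that the boundary between these two parts — traced along the outer face — is an interval structure controlled by a constant number of breakpoints, because sources and sinks lie in at most two rows. Concretely, one shows that in each of the (at most two) rows containing supply, $\bT_{f,1}$ restricted to that row is a union of $O(1)$ contiguous intervals of columns: if it were many intervals, one could find two vertex-disjoint tree paths connecting an alternating pattern of vertices on the outer face, contradicting Lemma~\ref{lem:intersect} applied with the four vertices placed clockwise on the boundary. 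This bounds the number of distinct values of the partial supply sum by a polynomial in $T$ (choosing $O(1)$ column breakpoints among $T$ columns gives $T^{O(1)}$), and simultaneously bounds the number of nontree crossing arcs that can carry capacity: each such arc either lies near a breakpoint or is forced into a forbidden disjoint-path configuration. Combining, $f_a$ is a sum of $O(KL)$ terms, each chosen from a set of size $T^{O(1)}$ (a supply subsum value, or a $\pm U_e$ term with $U_e$ one of $K$ capacities and $e$ one of $O(T)$ candidate arcs), so $M \le T^{O(KL)}$; tracking the exponents carefully yields $M = O(L^{?}T^{?})$ matching the stated bound.

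Plugging $M = O(L^{c_1} T^{c_2})$ with the right constants into the $O(LTM^{2L-2})$ bound of Theorem~\ref{prop:runtime} gives a running time of the form $O(L^{1+(2L-2)c_1} T^{1+(2L-2)c_2})$; matching this against $O(L^{4KL-4K+1}T^{4KL+4L-4K-3})$ pins down $c_1$ and $c_2$ (roughly $c_1 = 2K$ and $c_2 = 2K+2$, so that $(2L-2)c_1 = 4KL-4K$ and $(2L-2)c_2 + 1 = 4KL+4L-4K-3$). The main obstacle I anticipate is making the planarity argument fully rigorous: precisely defining which four boundary vertices to feed into Lemma~\ref{lem:intersect}, handling the case where $\bT_{f,1}$ or $\bT_{f,2}$ is disconnected in the plane once restricted to a single row, and carefully accounting for the interaction between the ``two rows of supply'' assumption and the two rows that forward arc $a$'s endpoints lie in (which may or may not coincide with the supply rows). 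A secondary bookkeeping challenge is the explicit exponent count — ensuring that the number of capacity-carrying crossing arcs really is $O(K)$ per relevant row and not $O(KT)$, which is exactly what the disjoint-path obstruction must rule out.
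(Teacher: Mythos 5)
Your treatment of the supply-sum term $\sum_{v_{l,t}\in\bT_{f,1}}b_{l,t}$ is essentially the paper's argument: after reducing to the case where the two supply rows are rows $1$ and $L$ (so all sources and sinks lie on the outer face), Lemma~\ref{lem:intersect} forbids four boundary vertices alternating between $\bT_{f,1}$ and $\bT_{f,2}$, so the sources and sinks in $\bT_{f,1}$ form a consecutive stretch of the boundary walk, and the partial sum is determined by $O(1)$ column breakpoints, giving $O(T^2)$ possible values. That part is fine.

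The gap is in your handling of the correction term $\sum_{e\in A_2}f_e-\sum_{e\in A_1}f_e$. You propose to show that the cut crosses only $O(K)$ (or $O(KL)$) capacity-carrying nontree arcs, via the same disjoint-path obstruction. That obstruction does not apply here: Lemma~\ref{lem:intersect} constrains how the \emph{tree} components $\bT_{f,1}$ and $\bT_{f,2}$ can interleave on the \emph{outer face}, but the arcs in $A_1\cup A_2$ are nontree arcs, their endpoints typically lie in interior rows (about which the theorem's hypothesis says nothing), and nothing prevents $\Theta(LT)$ saturated nontree arcs from crossing the cut. So the premise ``each such arc either lies near a breakpoint or is forced into a forbidden disjoint-path configuration'' is unjustified and, in general, false. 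The fix --- and what the paper actually does --- is to bound the number of distinct \emph{values} of the signed sum rather than the number of crossing arcs: every nontree arc has $f_e\in\{0,U_e\}$, infinite-capacity nontree arcs contribute $0$, and the finitely many capacities take only $K$ distinct values, so the correction term equals $\sum_{k=1}^{K}(n_k^+-n_k^-)U^{(k)}$ with each multiplicity $n_k^\pm$ an integer in $\{0,\dots,O(LT)\}$; hence it takes $O((LT)^{2K})$ values even if the cut is crossed by $\Theta(LT)$ saturated arcs. Multiplying by the $O(T^2)$ values of the supply sum gives $M=O(L^{2K}T^{2K+2})$, and substituting into Theorem~\ref{prop:runtime} yields the stated exponents directly --- your step of ``matching against the claimed bound to pin down $c_1,c_2$'' is circular and should be replaced by this explicit count.
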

\begin{proof}
First we assume sources and sinks are in row one and row $L$, i.e., all sources and sinks are on the boundary of the outer face of $G$. This is without loss of generality. If the sources and sinks are in row $l_1$ and row $l_2$, since all arcs in $G$ are forward and downward arcs, there will be zero flow in each of the arcs above row $l_1$ and below row $l_2$. We can eliminate these arcs from the grid and solve the problem over a smaller grid.

Given an arc $a$, an extreme point $f$, and its associated spanning tree, 
\[f_a = \sum_{v_{l, t} \in \bT_{f,1}}b_{l, t} + \sum_{e\in A_2} f_e - \sum_{e \in A_1}f_e,\]
by equation~\eqref{eq:flow:a}.  Our goal is to investigate the values of $\sum_{v_{l, t} \in \bT_{f,1}} b_{l, t}$ and $\sum_{e\in A_2} f_e - \sum_{e \in A_1}f_e$ in all extremes points, respectively. 

Consider the term $\sum_{v_{l, t} \in \bT_{f,1}}b_{l, t}$ first. By our assumption, all sources and sinks are on the boundary of the outer face of $G$. We claim that all sources and sinks in $\bT_{f,1}$ appear consecutively. In other words, there exist two vertices $v'$ and $v''$ in $\bT_{f,1}$ such that, if we walk from $v'$ to $v''$ along the boundary of the outer face of $G$, then all sources and sinks in $\bT_{f,1}$ will be visited with no source or sink in $\bT_{f,2}$ on the way. Thus the possible values for $\sum_{v_{l, t} \in \bT_{f,1}}b_{l, t}$ in all extreme points is contained in the set
\[S=\left\{\sum_{t=i}^j b_{1,t}, \sum_{t=i}^j b_{L,t}, \sum_{t=1}^ib_{1,t} + \sum_{t=1}^j b_{L,t}, \sum_{t=i}^Tb_{1,t} + \sum_{t=j}^T b_{L,t}, \text{ for } i,j\in [T]\right\}.\]
The set $S$ can be constructed in $O(T^2)$ time, with cardinality $O(T^2)$. To prove the claim, suppose there exist four vertices $v^i$ ($i=1,\ldots,4$) lying clockwise on the boundary of the outer face of $G$, with $v^1,v^3\in \bT_{f,1}$ and $v^2,v^4\in \bT_{f,2}$. Since $\bT_{f,1}$ is connected, there is a path between $v^1$ and $v^3$. Similarly there is a path between $v^2$ and $v^4$ in $\bT_{f,2}$. By Lemma~\ref{lem:intersect}, the two paths are not vertex-disjoint. But that implies $\bT_1$ and $\bT_2$ are connected, a contradiction. 

For the term $\sum_{e\in A_2} f_e - \sum_{e \in A_1}f_e$, each arc $e$ in $A_1\cup A_2$ is a nontree arc, so $f_e=0$ or $f_e=U_e$. Suppose there are $K$ different arc capacity values in $G$. With $O(LT)$ arcs in $G$, the maximum number of different values for $\sum_{e\in A_2} f_e - \sum_{e \in A_1}f_e$ is $O((LT)^{2K})=O((LT)^{2K})$, and these values can be enumerated in $O((LT)^{2K})$ time as well.

Therefore, the number of different values for $f_a$ under all extreme points is $O(T^2)\cdot O((LT)^{2K}) = O(L^{2K} T^{2K+2})$. From Theorem~\ref{prop:runtime}, the C-MFG can be solved in $O(L^{4KL-4K+1}T^{4KL+4L-4K-3})$ time.
\end{proof}

\begin{corollary} \label{cor:ccfg}
The C-MFG with constant $L$ and $K$ and sources and sinks in at most two rows is polynomially solvable.
\end{corollary}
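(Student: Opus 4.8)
The plan is to obtain this as an immediate specialization of Theorem~\ref{thm:ccfgl_p} to the case where $L$ and $K$ are fixed constants. Theorem~\ref{thm:ccfgl_p} already establishes, under exactly the hypothesis that sources and sinks lie in at most two rows, that the C-MFG is solvable in time $O(L^{4KL-4K+1}T^{4KL+4L-4K-3})$. So first I would note that the corollary's hypotheses are literally a sub-case of that theorem's hypothesis (``at most two rows'' is inherited verbatim), so the same algorithm applies. Then I would observe that when $L$ and $K$ are constants, the two exponents $4KL-4K+1$ and $4KL+4L-4K-3$ are themselves constants, say $\alpha$ and $\beta$; hence the factor $L^{\alpha}$ is an absolute constant and the running-time bound collapses to $O(T^{\beta})$.

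It then only remains to check that $O(T^{\beta})$ is polynomial in the input size. Since the instance description contains the grid $G$ on $LT$ vertices together with the supply and capacity vectors (and access to the cost oracles), the input size is $\Omega(T)$, so $T^{\beta}$ is polynomially bounded in it; recall that in the real-RAM model adopted in the paper we count arithmetic operations and oracle queries, so the real-valued costs do not enter the size accounting. Thus the algorithm of Theorem~\ref{thm:ccfgl_p} runs in polynomial time, which is precisely the assertion. There is no real obstacle here—the only point requiring a moment's care is confirming that every hypothesis invoked by Theorem~\ref{thm:ccfgl_p} (in particular, the restriction of sources and sinks to at most two rows) is still in force under the corollary's assumptions, which it is.
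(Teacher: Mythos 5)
Your proposal is correct and is exactly the route the paper takes: the corollary is stated as an immediate consequence of Theorem~\ref{thm:ccfgl_p}, since fixing $L$ and $K$ turns the bound $O(L^{4KL-4K+1}T^{4KL+4L-4K-3})$ into $O(T^{\beta})$ for a constant $\beta$, which is polynomial in the input size. No gaps.
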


If the grid has only two rows (the case for the basic capacitated lot sizing model, studied in~\cite{florian1971deterministic,van1996t3}), the condition in Corollary~\ref{cor:ccfg} can be relaxed. Instead of requiring the number of different capacity values over \emph{all} arcs to be constant, we
allow the forward arcs to have arbitrary capacities, and require the number of different capacity values over \emph{all downward arcs} to be constant. This condition cannot be further relaxed, since we know the lot sizing problem with general production capacity is NP-hard. Define
\[K_1 =\{U_a \mid U_a < +\infty, a=(v_{1,t},v_{2,t}) \text{ for some } t\in [T]\}\]
to be the number of different capacity values over downward arcs.

\begin{theorem} \label{prop:CFG-2CDpoly}
The C-MFG with two rows and $K_1$ different capacity values on downward arcs can be solved in $O(T^{4K_1+7})$ time; the problem is polynomially solvable with constant $K_1$.
\end{theorem}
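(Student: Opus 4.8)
The plan is to re-run the enumeration argument of Theorem~\ref{thm:ccfgl_p} in the special case $L=2$, but this time allowing the forward arcs (i.e. the two rows of horizontal arcs) to carry arbitrary capacities while controlling only the downward arcs. As before, fix an extreme point $f$ of $P_F$, fix a spanning tree $\bT_f$ associated to it via Proposition~\ref{prop:tree}, and pick an arc $a$ in $\bT_f$; removing $a$ splits $\bT_f$ into $\bT_{f,1} \ni u$ and $\bT_{f,2} \ni v$, and equation~\eqref{eq:flow:a} gives $f_a = \sum_{v_{l,t}\in\bT_{f,1}} b_{l,t} + \sum_{e\in A_2} f_e - \sum_{e\in A_1} f_e$. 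With only two rows, sources and sinks automatically lie on the outer face (rows $1$ and $2$), so the ``consecutive on the boundary'' claim from the proof of Theorem~\ref{thm:ccfgl_p} applies verbatim: the value of $\sum_{v_{l,t}\in\bT_{f,1}} b_{l,t}$ lies in a set $S$ of size $O(T^2)$, constructible in $O(T^2)$ time.

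The new point is how to bound the number of possible values of the correction term $\sum_{e\in A_2} f_e - \sum_{e\in A_1} f_e$, where $A_1\cup A_2$ is the set of nontree arcs crossing the cut $(\bT_{f,1},\bT_{f,2})$. In the $L=2$ grid a nontree arc is either a downward arc $(v_{1,t},v_{2,t})$ or a forward arc in one of the two rows; a nontree forward arc carries flow $0$ or its (arbitrary) capacity, so I cannot afford to enumerate over forward-arc capacities. The key structural observation I would use is that, in the planar $2$-row grid, the cut defined by a single tree edge $a$ is crossed by only a \emph{bounded} number of nontree forward arcs. Concretely: $\bT_{f,1}$ is a subtree of a $2\times T$ grid, hence (being connected and planar) it occupies an ``interval-like'' region; walking along the outer boundary the vertices of $\bT_{f,1}$ form a contiguous arc, so the cut is crossed by at most a constant number of forward arcs — at most two in the top row and two in the bottom row, i.e. $O(1)$ nontree forward arcs total — together with possibly many downward arcs, but downward arcs take only $K_1+1$ distinct flow values ($0$ or one of $K_1$ capacities). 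I would prove the $O(1)$ bound on crossing forward arcs from Lemma~\ref{lem:intersect} exactly as the consecutiveness claim was proved: if three forward arcs of the top row crossed the cut, the ``entering'' and ``leaving'' pattern of $\bT_{f,1}$ along the top row would force two vertex-disjoint boundary paths with interleaved endpoints, contradicting Lemma~\ref{lem:intersect}. Hence $\sum_{e\in A_2} f_e - \sum_{e\in A_1} f_e$ is a sum of $O(1)$ terms each from a set of $O(T)$ forward-arc capacities plus $O(LT)=O(T)$ terms each from a set of size $K_1+1$; the number of distinct values of a sum of the downward part is $O(T^{2K_1})$ (as in Theorem~\ref{thm:ccfgl_p} with $L=2$), and multiplying by the $O(T^{O(1)})$ choices for the constantly-many forward crossings gives $O(T^{2K_1+c})$ possible values of the correction term for a fixed constant $c$.

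Combining, the number $M$ of distinct flow values a forward arc attains over all extreme points is $|S|$ times the number of correction-term values, i.e. $M = O(T^2)\cdot O(T^{2K_1+c}) = O(T^{2K_1+c+2})$ for a suitable constant. Plugging $L=2$ into Theorem~\ref{prop:runtime}, which gives running time $O(LTM^{2L-2}) = O(TM^2)$ when $L=2$, yields $O(T\cdot T^{2(2K_1+c+2)}) = O(T^{4K_1+O(1)})$; tracking the constants carefully produces the claimed $O(T^{4K_1+7})$, and for constant $K_1$ this is polynomial. The main obstacle I anticipate is pinning down the exact constant $c$ (equivalently, the exact number of forward arcs that can cross the cut and the precise exponent bookkeeping) so that the final exponent comes out to $4K_1+7$ rather than something slightly larger; the planarity/consecutiveness argument itself is a direct reuse of Lemma~\ref{lem:intersect} and should go through cleanly.
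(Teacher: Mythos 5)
Your proposal is correct and takes essentially the same route as the paper's proof: decompose $f_a$ via~\eqref{eq:flow:a}, get $O(T^2)$ values for the supply term from boundary-consecutiveness, and control the cut-crossing nontree arcs separately for forward and downward arcs. The only difference is bookkeeping: the paper uses the fact that in a two-row grid \emph{every} vertex lies on the outer face, so the cut induced by deleting the forward tree arc $a$ has exactly two boundary transitions, one of which is $a$ itself --- hence at most \emph{one} nontree forward arc crosses the cut, which gives the correction term $O(T^{2K_1+1})$ values, $M=O(T^{2K_1+3})$, and the stated exponent $4K_1+7$, whereas your looser count of up to four crossing forward arcs still yields polynomiality but with a slightly larger exponent.
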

\begin{proof}
The idea of proof is similar to that of Theorem~\ref{thm:ccfgl_p}. Given a forward arc $a=(u,v)$, an extreme point $f$, and its associated spanning tree, 
\[f_a = \sum_{v_{l, t} \in \bT_{f,1}}b_{l, t} + \sum_{e\in A_2} f_e - \sum_{e \in A_1}f_e,\]
by equation~\eqref{eq:flow:a}. By a similar argument as in Theorem~\ref{thm:ccfgl_p}, the term $ \sum_{v_{l, t} \in \bT_{f,1}}b_{l, t}$ can take $O(T^2)$ different values. Consider the term $\sum_{e\in A_2} f_e - \sum_{e \in A_1}f_e$. Since the grid only has two rows, the set $A_1\cup A_2$ can contains at most one forward nontree arc between $\bT_{f,1}$ and $\bT_{f,2}$ after arc $a$ is deleted, and the flow over that forward arc can take $O(T)$ different values; the set $A_1 \cup  A_2$ contains at most $T$ downward arcs, each of which can take $K_1$ different values. Thus $\sum_{e\in A_2} f_e - \sum_{e \in A_1}f_e$ can take $O(T^{2K_1+1})$ values. Then $f_a$ can take $O(T^{2K_1+3})$ values under all extreme points of $P_F$. From Theorem~\ref{prop:runtime}, the C-MFG with $L=2$ and $K_1$ different capacity values on downward arcs can be solved in $O(T^{4K_1+7})$ time.
\end{proof}

\subsection{NP-hard cases} \label{sec:ccfgnphard}
In this section, we will show that the problems in Corollary~\ref{cor:ccfg} and Theorem~\ref{prop:CFG-2CDpoly} are essentially the most general polynomially solvable C-MFG cases unless P=NP. Before stating any results, we want to point out that all proofs for the NP-hard cases in rest of the paper will be accompanied by a figure, to better illustrate the reduction from an instance of an NP-hard problem to an MFG instance. Legends in these figures describe the parameters of the MFG instance: the amount of supply or demand is marked next to the corresponding vertex; the pair $(c,U)$ next to an arc $a$ denotes that the cost of sending any nonzero flow over arc $a$ is $c$ and 0 otherwise, and the capacity of arc $a$ is $U$; an arc without such parameters next to it has infinity capacity and zero cost of sending any flow; an arc not present in the figure denotes that it will never be used in any optimal solution because of its large cost. See Figure~\ref{fig:ccfg2sinks} below for an example.

The next proposition shows the C-MFG becomes NP-hard if sources and sinks are in three rows.
\begin{proposition} \label{thm:CCFGNP-hard}
The C-MFG with $L=3$, $K=1$, a single source, and sinks in two other rows is NP-hard.
\end{proposition}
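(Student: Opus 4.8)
The plan is to reduce from a strongly NP-hard scheduling or partition-type problem—most naturally from $3$-Partition or from the subset-sum/partition problem used for capacitated lot sizing hardness—by encoding the arithmetic constraint into the capacitated arcs of a $3$-row grid. First I would set up the grid with $L=3$ and $T$ columns, where $T$ is polynomial in the input size of the partition instance. The single source sits in the middle row (row $2$), and the two sink rows are row $1$ and row $3$. The idea is that flow leaving the source must be split: part of it is routed ``up'' into row $1$ to meet the demands there, and the rest is routed ``down'' into row $3$ to meet the demands there. The vertical (downward) arcs between rows are given a uniform finite capacity (so $K=1$), and the fixed-charge costs on the forward arcs in the grid are chosen so that an optimal flow is forced to make an all-or-nothing decision at each column—either the column's demand is served from the left or it triggers a new ``batch,'' mimicking the setup structure in lot sizing. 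The numeric values $a_i$ of the partition instance are encoded as demands placed at designated columns of row $1$ (and their complements, or a target sum, in row $3$), so that a feasible flow respecting the single capacity value on the downward arcs exists if and only if the $a_i$'s can be partitioned to hit the target.

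The key steps, in order, are: (1) define the reduction precisely—specify $b_{l,t}$ for all vertices, the single capacity value $U$ on all downward arcs, infinite capacity on forward arcs, and the fixed-charge cost functions $c_a$; (2) argue the ``yes $\Rightarrow$ yes'' direction: given a valid partition, construct an explicit feasible flow and bound its cost, checking it respects the capacity $U$ on every downward arc; (3) argue the ``no $\Rightarrow$ no'' direction: show any flow of cost below the threshold must route exactly the target amount through a particular downward arc (or across a particular cut), which forces a subset of the $a_i$'s summing to the target, contradicting non-existence of a partition; (4) verify the construction is polynomial-size and that cost comparisons are exact (consistent with the real-RAM model, all costs here are integers). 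I would accompany the construction with a figure in the style described just before the proposition, labeling each arc with its $(c,U)$ pair.

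The main obstacle I expect is step (3): controlling an \emph{arbitrary} optimal (extreme-point) flow in a $3$-row grid so that it genuinely certifies a partition. Because flow in row $2$ can move forward and then drop down at many different columns, I must design the cost structure so that the only way to achieve the target cost is the ``clean'' schedule; in particular I need to rule out flows that overshoot demand in one sink row and compensate, or that use the middle row to buffer flow and shuffle it between the two sink regions. The standard trick is to make the downward-arc capacity $U$ exactly equal to the target partition sum (or to the per-block sum in $3$-Partition), and to place demands so that serving all of row $1$ requires \emph{filling} a set of downward arcs to capacity while serving all of row $3$ requires the complement—so tightness of the capacity constraint does the combinatorial work. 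A secondary subtlety is ensuring the instance remains genuinely $3$-row and not collapsible: I must check that no optimal solution leaves a whole row with zero flow (which would reduce it to a two-row instance), which is guaranteed by placing strictly positive demand in both row $1$ and row $3$.
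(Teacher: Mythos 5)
There is a genuine gap, and it starts with the very first structural choice. You place the single source in row $2$ and propose to route part of its flow ``up'' into row $1$. But in this paper's grid the vertical arcs are all of the form $(v_{l,t},v_{l+1,t})$, i.e.\ they point strictly downward (upward arcs are only introduced later, in the extensions of Section~\ref{sec:extension}, where the problem is shown to be NP-hard for a different reason). With the source in row $2$ and positive demand in row $1$, your instance is simply infeasible: no directed path reaches row $1$ from row $2$. For a single source and sinks in two other rows, the only feasible configuration is source in row $1$ and sinks in rows $2$ and $3$, which is what the paper uses.

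The second gap is the one you yourself flag as the ``main obstacle'': you never actually specify the gadget that forces an arbitrary low-cost flow to certify a partition, and your proposed fix (capacity equal to the target sum, demands forcing some downward arcs to be saturated and others to carry the complement) is not developed enough to check. The paper's mechanism is worth contrasting: it reduces from \emph{knapsack}, not partition, and the arithmetic is done by a capacity-minus-demand trick rather than by saturation. All downward arcs get the same capacity $B\ge\max_i y_i$ (so $K=1$); vertex $v_{2,i}$ demands $B-y_i$; hence the flow that can continue from $v_{2,i}$ down to row $3$ is at most $B-(B-y_i)=y_i$. A fixed charge $c_i$ on the single arc $(v_{2,i},v_{3,i})$ then makes ``pay $c_i$ to gain at most $y_i$ units toward the demand $Y$ at $v_{3,n}$'' exactly the knapsack trade-off, and the cost threshold $C$ completes the equivalence in both directions with no need to control overshooting or buffering, since every unit reaching row $3$ must pass through some arc $(v_{2,i},v_{3,i})$ and the capacity bound is per-column. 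Your proposal would need both the orientation fixed and a concrete gadget of this kind before it could be called a proof.
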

\begin{proof}
Our proof is based on a reduction from the knapsack problem~\cite{garey1979computers}. The knapsack problem asks that given a set of $n$ items with item $i$ having value $y_i$ and cost $c_i$ for $i=1,\ldots,n$, if there exists a subset of items with cost at most $C$ and total value at least $Y$. We construct a C-MFG instance with $L=3$, $K=1$, a single source, and sinks in two rows as follows. First choose a value $B\ge\max\{y_i \mid i \in [n]\}$. Construct a C-MFG instance with three rows, $n$ columns, one source ($v_{1,1}$ with supply $\sum_{i=1}^n(B-y_i)+Y$), and $n+1$ sinks ($v_{2,i}$ with demand $(B-y_i)$ for $i\in [n]$ and $v_{3,n}$ with demand $Y$), as shown in Figure~\ref{fig:ccfg2sinks}. The cost over each arc $(v_{2,i},v_{3,i})$ has a fixed cost $c_i$ for nonzero flow and 0 otherwise, the cost over each of the rest of arcs in Figure~\ref{fig:ccfg2sinks} is always 0, and the cost over each of the arcs not present in Figure~\ref{fig:ccfg2sinks} is always large enough, say $\sum_{i=1}^nc_i + C$, so that they are never used in any optimal solution. The capacity of each downward arc is $B$, and the capacity of each forward arc is $\infty$.

\begin{figure}[htb]
\centering
\includegraphics[scale=0.75]{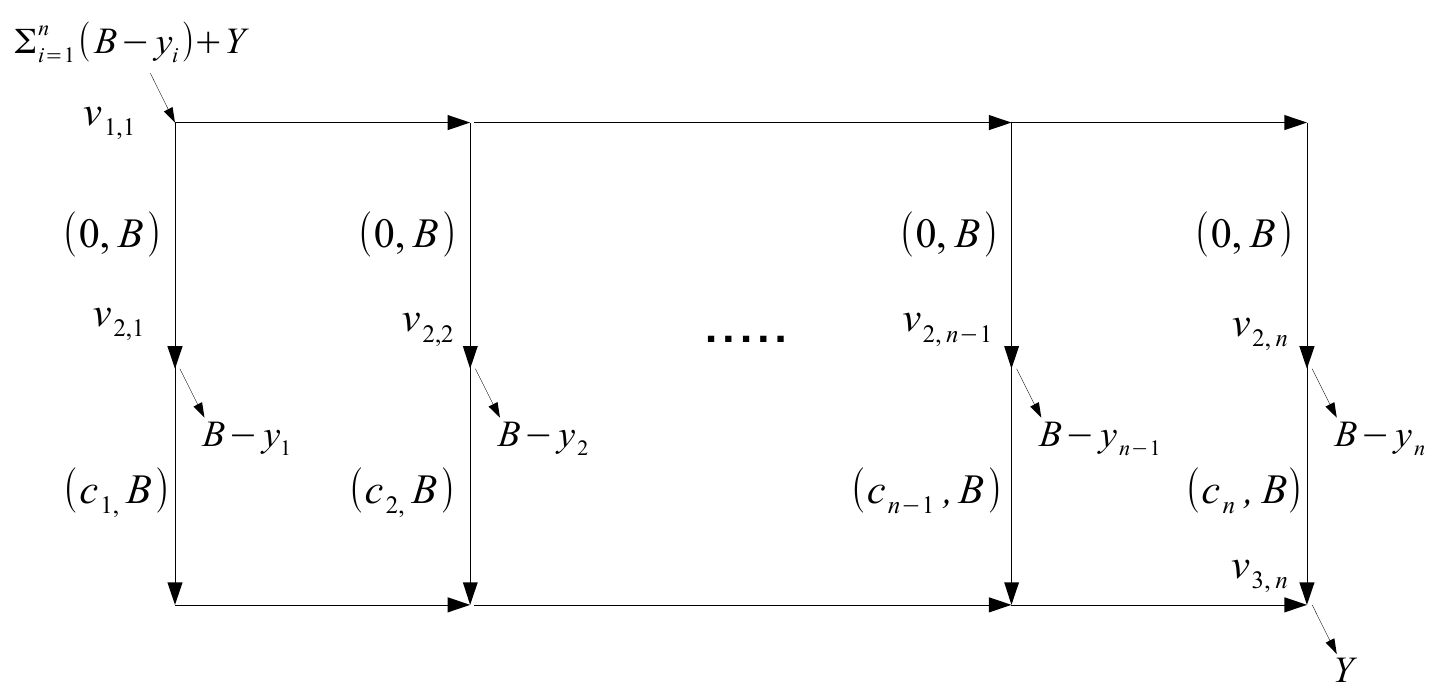}
\caption{The C-MFG with $L=3$, $K=1$, a single source, and sinks in two rows} 
\label{fig:ccfg2sinks}
\end{figure}

We claim that the knapsack instance is a yes instance if and only if the optimal objective value of the constructed C-MFG instance is no greater than $C$. We first show that given any feasible solution of the knapsack instance, we can construct a feasible flow in the C-MFG instance with objective at most $C$. Let $\epsilon = \sum_{i=1}^n y_i -Y \ge 0$. Since $\epsilon \le \sum_{i=1}^n y_i$, we can find $\epsilon_1,\ldots, \epsilon_n$ such that $0\le \epsilon_i < y_i$ for $i\in [n]$ and $\sum_{i=1}^n \epsilon_i = \epsilon$. Given a feasible solution $S\subseteq [n]$ of the knapsack instance with cost $\sum_{i\in S}c_i$, construct a feasible flow in the C-MFG instance as follows: the flow sent over arc $(v_{1,i},v_{2,i})$ for $i \notin S$ equals the demand $(B-y_i)$ at $v_{2,i}$; the flow sent over arc $(v_{1,i},v_{2,i})$ for $i \in S$ equals $B - \epsilon_i$; the flow sent over arc $(v_{2,i},v_{3,i})$ for $i\in S$ equals $y_i - \epsilon_i$; the flow sent over other arcs are calculated according to the flow conservation constraints. The cost of the constructed feasible flow is $\sum_{i\in S} c_i$, which is at most $C$. On the other hand, if the optimal objective value of the C-MFG instance is at most $C$, then the knapsack instance must be a yes instance. Suppose an optimal flow $f^*$ in the C-MFG instance has cost no greater than $C$. Let $f^*_i$ be the flow over arc $(v_{2,i},v_{3,i})$ for $i\in [n]$. Let $S=\{i \mid f^*_i >0\}$. We claim that $S$ is a feasible solution of the knapsack instance, i.e., $\sum_{i\in S}c_i \le C$ and $\sum_{i\in S}y_i \ge Y$. The cost of flow $f^*$ is $\sum_{i\in S} c_i$, which is no greater than $C$. Since the capacity of downward arc $(v_{1,i}, v_{2,i})$ is $B$ and the demand at vertex $v_{2,i}$ is $B-y_i$, the maximum amount of flow that can be sent along downward arc $(v_{2,i},v_{3,i})$ is at most $B-(B-y_i)=y_i$ for $i\in S$. Thus $\sum_{i\in S}y_i \ge \sum_{i\in S}f^*_i = Y$.
\end{proof}

The capacitated lot sizing problem indicates that the C-MFG with arbitrary capacities on downward arcs is NP-hard. The following proposition shows that the C-MFG with arbitrary capacities on forward arcs is also NP-hard. Both imply the condition that the number of different capacity values $K$ being constant is crucial for the C-MFG to be polynomially solvable.
\begin{proposition} \label{cor:ccfg3}
The C-MFG with $L=3$, a single source, a single sink, and arbitrary capacities on forward arcs is NP-hard.
\end{proposition}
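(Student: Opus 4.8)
The plan is to reduce from the same knapsack problem used in the proof of Proposition~\ref{thm:CCFGNP-hard}: given items $1,\dots,n$ with value $y_i$ and cost $c_i$, a value target $Y$ with $0<Y\le\sigma:=\sum_{i=1}^n y_i$, and a budget $C$ (we may assume $C<\sum_j c_j$, since otherwise taking all items is a trivial yes-solution), decide whether some $S\subseteq[n]$ satisfies $\sum_{i\in S}y_i\ge Y$ and $\sum_{i\in S}c_i\le C$. I would build a $3$-row, $(2n+1)$-column grid that routes $\sigma$ units from the single source $v_{1,1}$ to the single sink $v_{3,2n+1}$, designed so that the flow must "meter itself out" of row $1$ and, for each item $i$, independently decide whether to pay $c_i$ in order to push its share down to row $3$ through a bottleneck forward arc.

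The gadget for item $i$ occupies columns $2i-1$ and $2i$: its ``select $i$'' route is $v_{1,2i-1}\to v_{2,2i-1}\to v_{2,2i}\to v_{3,2i}$ and then along row $3$ to the sink, and its bottleneck is the forward arc $(v_{2,2i-1},v_{2,2i})$, which receives capacity $y_i$ and the fixed-charge cost ``$c_i$ if positive, $0$ otherwise''. To force enough items to be selected, I would give the forward arc of row $1$ leaving $v_{1,2n}$ capacity $\sigma-Y$; this arc carries exactly the flow that has not dropped to row $2$ at any column $2i-1$, so capping it forces the total of the per-gadget drops to be at least $Y$, and the remaining ``unselected'' flow leaves row $1$ at column $2n+1$, descends to row $2$ and then row $3$, and merges into the sink along row $3$ at no cost. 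All other forward arcs are uncapacitated with zero cost, except the ``between-gadget'' arcs $(v_{2,2i},v_{2,2i+1})$ for $i\in[n]$, which are uncapacitated but carry a prohibitive fixed charge $M:=1+\sum_j c_j$; every downward arc is uncapacitated, free on $(v_{1,2i-1},v_{2,2i-1})$, $(v_{2,2i},v_{3,2i})$, $(v_{1,2n+1},v_{2,2n+1})$, $(v_{2,2n+1},v_{3,2n+1})$, and carrying the prohibitive cost $M$ on all others (notably on $(v_{1,2i},v_{2,2i})$ and $(v_{2,2i-1},v_{3,2i-1})$). Thus every finite capacity in the instance sits on a forward arc, as the statement requires, and the prohibitive-cost arcs are precisely the ones whose use would let flow bypass the intended gadget structure.

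I would then show that the optimum of the constructed MFG equals $\min\{\sum_{i\in S}c_i : S\subseteq[n],\ \sum_{i\in S}y_i\ge Y\}$, hence is at most $C$ iff the knapsack instance is a yes-instance. The ``if'' direction is direct: given $S$ with $\sum_{i\in S}y_i\ge Y$ and $\sum_{i\in S}c_i\le C$, send $y_i$ units along the gadget route of each $i\in S$ and route the residual $\sigma-\sum_{i\in S}y_i\le\sigma-Y$ units along row $1$ through column $2n+1$; all capacities are respected and the cost is exactly $\sum_{i\in S}c_i$. For the ``only if'' direction, any feasible flow of cost at most $C<M$ puts zero flow on every prohibitive-cost arc; flow conservation on the remaining subnetwork then forces, for each $i$, that the entire amount $\delta_i$ dropped at $v_{1,2i-1}$ is carried through $(v_{2,2i-1},v_{2,2i})$ (so $\delta_i\le y_i$) and down to $v_{3,2i}$, and that the row-$1$ arc out of $v_{1,2n}$ carries $\sigma-\sum_i\delta_i\le\sigma-Y$. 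Taking $S:=\{i:\delta_i>0\}$ gives $\sum_{i\in S}y_i\ge\sum_i\delta_i\ge Y$ and a flow cost equal to $\sum_{i\in S}c_i\le C$.

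The main obstacle is the ``only if'' direction: turning the informal picture—each item's flow is either fully routed through its bottleneck or not—into a rigorous consequence of flow conservation together with the two kinds of capacities. Concretely, one must verify that the prohibitive costs really do suppress all three ``cheats'': free drops from row $1$ to row $2$ at even columns, an immediate row-$2$-to-row-$3$ drop at a gadget entrance $v_{2,2i-1}$, and flow carried between gadgets in row $2$. Once this structural fact is established, the only degrees of freedom in an optimal flow are the per-item amounts $\delta_i\in[0,y_i]$ with $\sum_i\delta_i\ge Y$, and the equivalence with knapsack is immediate; the delicate part is purely the calibration of the construction that makes the structural claim hold.
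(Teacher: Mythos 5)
Your reduction is correct and is essentially the paper's own: the same knapsack gadget of parallel row-2 forward arcs $(v_{2,2i-1},v_{2,2i})$ with capacity $y_i$ and fixed charge $c_i$ serving as the only routes from row 1 to row 3, with prohibitive costs suppressing all bypasses. The only difference is calibration: the paper sets the supply and demand to exactly $Y$ so that flow conservation alone forces $\sum_i\delta_i=Y$, which makes your extra capacitated row-1 arc of capacity $\sigma-Y$ and the column-$(2n+1)$ escape route unnecessary.
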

\begin{proof}
Our proof is based on a reduction from the knapsack problem. We construct a C-MFG instance with three rows, a single source $v_{1,1}$ with supply $Y$, and a single sink $v_{3,2n}$ with demand $Y$, as shown in Figure~\ref{fig:ccfg_horizon}. The forward arc $(v_{2,2i-1},v_{2,2i})$ has a capacity $y_i$, and a cost of $c_i$ of sending any nonzero flow and 0 otherwise, for $i\in [n]$. Each of the remaining arcs in Figure~\ref{fig:ccfg_horizon} has zero cost of sending any flow and an infinite capacity. 

\begin{figure}[htb]
\centering
\includegraphics[scale=0.7]{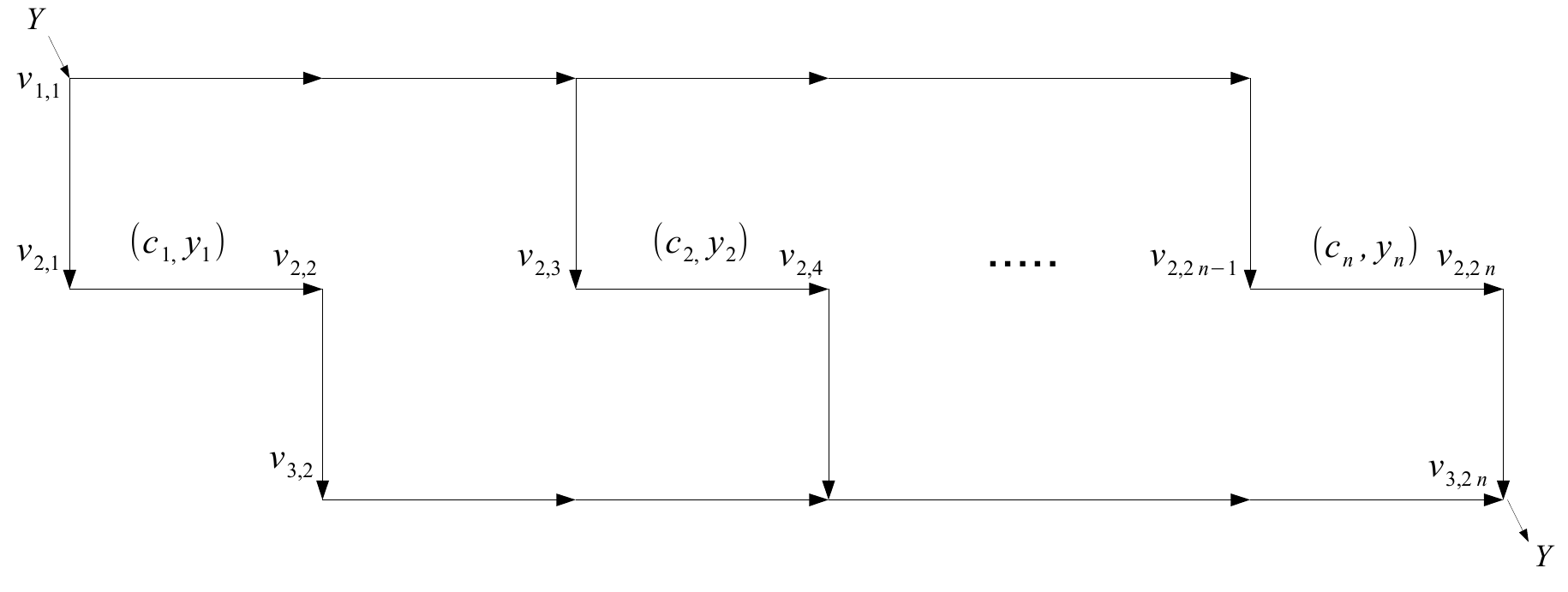}
\caption{The C-MFG with three rows, a single source, a single sink, and general capacity on forward arcs} 
\label{fig:ccfg_horizon}
\end{figure}

We claim that the knapsack instance is a yes instance if and only if the optimal objective value of the constructed C-MFG instance is no greater than $C$. Let $S\subseteq [n]$ be a feasible solution of the knapsack instance, i.e., $\sum_{i\in S}c_i \le C$ and $\sum_{i\in S}y_i \ge Y$. Let $\epsilon = \sum_{i=1}^n y_i -Y \ge 0$. Since $\epsilon \le \sum_{i=1}^n y_i$, we can find $\epsilon_1,\ldots, \epsilon_n$ such that $0\le \epsilon_i < y_i$ for $i=1,\ldots,n$ and $\sum_{i=1}^n \epsilon_i = \epsilon$. Construct a feasible flow in the C-MFG instance as follows: the flow sent over arc $(v_{2, 2i-1}, v_{2,2i})$ is $y_i -\epsilon_i$ for $i\in S$; the flow sent over other arcs are calculated according to the flow conservation constraints. The cost of the constructed feasible flow is $\sum_{i\in S}c_i$, which is at most $C$. On the other hand, let $f^*$ be an optimal flow of the C-MFG instance and $f^*_i$ be the flow over arc $(v_{2,2i-1}, v_{2,2i})$ for $i \in [n]$. Let $S=\{i \mid f^*_i >0\}$. Then it is easy to verify that $S$ is a feasible solution of the knapsack instance.
\end{proof}

\begin{remark}
The conditions for a capacitated lot sizing model to be polynomially solvable are very subtle. By Proposition~\ref{thm:CCFGNP-hard}, the \emph{constant capacitated} multi-echelon lot sizing problem with intermediate demands is NP-hard, in contrast to the polynomially solvable \emph{uncapacitated} multi-echelon lot sizing problem with intermediate demands~\cite{zhang2011polyhedral,he2012minimum}. By Proposition~\ref{cor:ccfg3}, the uncapacitated \emph{multi-echelon} lot sizing problem with variable storage bounds is NP-hard, in contrast to the polynomially solvable \emph{single-echelon} lot sizing problem with variable storage bounds~\cite{atamturk2008n2}.
\end{remark}

\section{The U-MFG} \label{sec:UCFG}
We first present a general condition for the U-MFG to be polynomially solvable, and then complement that with several NP-hard cases.
\subsection{Polynomially solvable cases}
\begin{theorem} \label{thm:CFGL} 
The U-MFG with sources in one row can be solved in $O(LT^{8L^2-12L+5})$ time.
\end{theorem}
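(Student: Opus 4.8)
\noindent\emph{Proof plan.} The plan is to bound the quantity $M$ from Theorem~\ref{prop:runtime}---the largest number of distinct flow values a forward arc can take over all extreme points of $P_F$---by $O(T^{4L-2})$ in this setting; since $(4L-2)(2L-2)=8L^2-12L+4$, Theorem~\ref{prop:runtime} then yields a running time of $O\!\left(LT\cdot (T^{4L-2})^{2L-2}\right)=O(LT^{8L^2-12L+5})$.

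Before analyzing extreme points I would normalize the instance so that all sources lie in row $1$. Indeed, every arc of $G$ is forward or downward, so no directed path can leave the rows at or below the source row; feasibility then forces every sink into those rows as well, and the rows above carry no flow and can be deleted. The running-time bound is monotone in $L$, so this reduction is harmless. Now fix an extreme point $f$ and, via Proposition~\ref{prop:tree}, an associated spanning tree $\bT_f$. Because the instance is uncapacitated, every nontree arc is restricted with capacity $+\infty$ and hence carries zero flow; substituting this into~\eqref{eq:flow:a} gives, for a forward tree arc $a$,
\[
f_a=\sum_{v_{l,t}\in \bT_{f,1}}b_{l,t},
\]
while every forward nontree arc carries flow $0$. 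So it remains to show that, as $f$ ranges over extreme points and $a$ over forward arcs, the sum $\sum_{v_{l,t}\in \bT_{f,1}}b_{l,t}$ takes at most $O(T^{4L-2})$ values.

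The core of the argument is a structural description of the set $\bT_{f,1}$. Both $\bT_{f,1}$ and $\bT_{f,2}$ are connected (they are subtrees of $\bT_f$) and $G$ is plane, so I would invoke Lemma~\ref{lem:intersect}---applied both to the outer face of $G$ and to the outer faces of suitable horizontal bands of $G$---to prove that the interface between $\bT_{f,1}$ and $\bT_{f,2}$ has bounded combinatorial complexity. Concretely, the goal is: after fixing $O(1)$ discrete ``type'' data, $\bT_{f,1}$ is determined by at most $4L-2$ integers in $\{0,1,\dots,T\}$; for instance, $\bT_{f,1}\cap(\text{row }l)$ is an interval or the complement of an interval for every $l$, and the way the boundary curve traverses the $2L-1$ horizontal strips of $G$ (the $L$ rows together with the $L-1$ gaps between consecutive rows) is recorded by two column indices per strip. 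Granting this, $\sum_{v_{l,t}\in \bT_{f,1}}b_{l,t}$ becomes a fixed linear combination of one-dimensional prefix sums $\sum_{t\le k}b_{l,t}$ of the supply vector, hence a function of those $\le 4L-2$ indices together with the $O(1)$ type data, and so takes $O(T^{4L-2})$ values; combining with the reduction and Theorem~\ref{prop:runtime} completes the proof.

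The main obstacle is exactly this structural lemma: one must rule out, using only planarity and the single-row placement of sources, that the boundary between the two connected halves of a spanning-tree cut ``wiggles'' between rows in a way requiring $\omega(L)$ parameters to describe. The remaining ingredients---the normalization, the evaluation of $f_a$ through~\eqref{eq:flow:a}, and the final counting---are routine once the structure is in hand.
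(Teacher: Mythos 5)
Your reduction to bounding $M$ in Theorem~\ref{prop:runtime}, the observation that in the uncapacitated case all nontree arcs carry zero flow so that $f_a=\sum_{v_{l,t}\in \bT_{f,1}}b_{l,t}$, and the final counting ($\le 4L-2$ column indices, hence $M=O(T^{4L-2})$ and $O(LT^{8L^2-12L+5})$ total) all match the paper. However, the step you yourself flag as ``the main obstacle'' is precisely the content of the paper's proof (Proposition~\ref{prop:CFGLInflow} and Lemmas~\ref{lemma:kappa}--\ref{lemma:intervals}), and your sketch of how to get it does not work. If you take an arbitrary spanning tree from Proposition~\ref{prop:tree}, the claimed structure is false: for a middle row $l$ with $1<l<L$, the cut $(\bT_{f,1},\bT_{f,2})$ can alternate $\Theta(T)$ times along row $l$. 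A comb-shaped completion shows this: let $\bT_{f,1}$ span row one together with the row-two vertices in odd columns (attached by zero-flow downward arcs), let $\bT_{f,2}$ span the remaining rows together with the row-two vertices in even columns, and join them by a single arc $a$; both sides are connected, the graph is plane, and no contradiction with Lemma~\ref{lem:intersect} arises because interior rows are not on the outer face. Applying Lemma~\ref{lem:intersect} to ``horizontal bands'' does not rescue this, since $\bT_{f,1}$ and $\bT_{f,2}$ restricted to a band need not be connected inside the band---their connecting paths may leave it---so the hypothesis of the lemma is unavailable.

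What the paper does instead, and what is missing from your proposal, is to build a \emph{special} spanning tree rather than an arbitrary one: Algorithm~\ref{alg:spanning} completes the forest of free arcs (all of which are directed away from row one, because every source is in row one and flow is conserved) into a tree satisfying Property~\ref{property:accessible}, i.e.\ every vertex is reached from row one by a \emph{directed} tree path. This directed structure is what substitutes for planarity in the interior rows: one defines $\kappa_l$, proves it is monotone and that the paths $P_{l,i}$ form arborescences (Lemma~\ref{lemma:kappa}), splits the type-2 vertices of row $l$ into those whose path uses $a$ (type 2A, which form a single interval by Lemma~\ref{lemma:2A}) and the rest (type 2B), and transfers any alternation in row $l$ to an alternation among the starting vertices in row one, where Lemma~\ref{lem:intersect} does apply (Lemmas~\ref{lemma:row1} and~\ref{lemma:rowl}, yielding Lemma~\ref{lemma:intervals}). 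Without this construction your argument establishes the easy parts (normalization, the formula for $f_a$, and the arithmetic) but not the structural bound on $\bT_{f,1}$, so the proof as proposed is incomplete, and the specific route you suggest for closing it fails.
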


\noindent Theorem~\ref{thm:CFGL} immediately implies the following result.
\begin{corollary} \label{cor:mfgl}
The U-MFG with sources in one row and a constant number of rows is polynomially solvable.
\end{corollary}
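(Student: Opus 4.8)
The corollary follows immediately from Theorem~\ref{thm:CFGL}: when $L$ is a constant the exponent $8L^2-12L+5$ is a constant, so $O(LT^{8L^2-12L+5})$ is polynomial in $T$ and hence in the input size (the grid has $LT$ vertices, $O(LT)$ arcs, and one cost oracle per arc). So the content is all in Theorem~\ref{thm:CFGL}, and the plan is to prove that by running Algorithm~\ref{alg:mfg} and invoking Theorem~\ref{prop:runtime}, whose bound is $O(LTM^{2L-2})$ with $M$ the maximum number of distinct flow values a forward arc attains over all extreme points of $P_F$. Since $(4L-2)(2L-2)=8L^2-12L+4$, it suffices to establish $M=O(T^{4L-2})$; substituting this into Theorem~\ref{prop:runtime} reproduces exactly $O(LT^{8L^2-12L+5})$. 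Thus the whole argument reduces to a bound on $M$ in the uncapacitated case with all sources in one row.

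To bound $M$, I would fix an extreme point $f$, an associated spanning tree $\bT_f$ (Proposition~\ref{prop:tree}), and a forward arc $a=(v_{l,t},v_{l,t+1})$, and apply the formula~\eqref{eq:flow:a}. The key simplification is that, since every capacity is infinite, a restricted arc must carry zero flow; in particular every nontree arc carries zero flow, so the term $\sum_{e\in A_2}f_e-\sum_{e\in A_1}f_e$ vanishes and $f_a=\sum_{v_{l',t'}\in\bT_{f,1}}b_{l',t'}$. Hence $f_a$ depends only on which sources and sinks lie in the component $\bT_{f,1}$ containing the tail of $a$. Since the grid has no upward arcs, I may assume without loss of generality that all sources are in row~$1$ (the rows above the source row carry no flow and may be deleted, and the exponent in the running time is increasing in $L$). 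Then, as in the proof of Theorem~\ref{thm:ccfgl_p}, I would use planarity: because $\bT_{f,1}$ and $\bT_{f,2}$ are the two connected pieces of a tree, no two vertices of $\bT_{f,1}$ on the outer face can be separated along that face by two vertices of $\bT_{f,2}$ --- otherwise connecting paths inside the two components would be vertex-disjoint, contradicting Lemma~\ref{lem:intersect}. From this I would extract a ``staircase'' description of $\bT_{f,1}$: its intersection with each row is a union of a bounded number of column-intervals, with the source row further constrained to a single interval. Summing $b$-values over such configurations yields $O(T^2)$ possible contributions from the source row and $O(T^4)$ from each of the remaining $L-1$ rows, so $\sum_{v\in\bT_{f,1}}b_v$ takes $O(T^2\cdot(T^4)^{L-1})=O(T^{4L-2})$ values, giving $M=O(T^{4L-2})$ and completing the proof.

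The main obstacle is the staircase lemma for $\bT_{f,1}$. Lemma~\ref{lem:intersect} applies straightforwardly to rows lying on the outer boundary of $G$, but for interior rows one must choose the four boundary vertices carefully --- combining the top edge of the grid with the positions of the head and tail of $a$ --- and then argue that a third ``block'' in a given row would force either an undirected cycle in $\bT_f$ or a violation of Lemma~\ref{lem:intersect}; determining precisely how many blocks can occur in each row, and why the source row admits only one, is the delicate step. The remaining pieces --- reducing $f_a$ to $\sum_{v\in\bT_{f,1}}b_v$, the vanishing of the nontree terms, the reduction to sources in row~$1$, and the final substitution into Theorem~\ref{prop:runtime} --- are routine.
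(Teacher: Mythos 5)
Your reduction of the corollary to Theorem~\ref{thm:CFGL}, and of that theorem to Theorem~\ref{prop:runtime} plus the bound $M=O(T^{4L-2})$, matches the paper exactly, as do the routine pieces: in the uncapacitated case every nontree arc is restricted and hence carries zero flow, so~\eqref{eq:flow:a} collapses to $f_a=\sum_{v\in\bT_{f,1}}b_v$; the reduction to sources in row one is fine; and the arithmetic $O(T^2)\cdot O(T^4)^{L-1}=O(T^{4L-2})$ reproduces the stated running time. The problem is precisely the step you flag as ``delicate'' and leave unproved: the claim that for a spanning tree solution obtained from Proposition~\ref{prop:tree} (i.e., an \emph{arbitrary} completion of the free arcs to a spanning tree), the set $\bT_{f,1}$ meets each row in a bounded number of column-intervals. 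For interior rows this is simply false for arbitrary spanning trees, and no careful choice of four outer-face vertices in Lemma~\ref{lem:intersect} can rescue it. Concretely, take $L=3$, one source $v_{1,1}$ with supply $1$ and sink $v_{3,1}$ with demand $1$, and the extreme flow routing one unit down column~$1$; complete its two free arcs to the spanning tree consisting of all of row one, all of row three, the downward arcs $(v_{1,t},v_{2,t})$ for odd $t$, the downward arcs $(v_{2,t},v_{3,t})$ for even $t$, and the joining arc $a=(v_{2,1},v_{3,1})$. Deleting $a$ splits the tree into two ``combs'' whose intersections with row two are the odd and even columns respectively, i.e., $\Theta(T)$ intervals. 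With sinks placed in row two, $\sum_{v\in\bT_{f,1}}b_v$ is then not a sum over $O(1)$ intervals per row, and your enumeration breaks down.

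What the paper adds, and what your proposal is missing, is a \emph{special} spanning tree: Algorithm~\ref{alg:spanning} builds, for each extreme point, a spanning tree satisfying Property~(A1) (every vertex is reachable from row one by a \emph{directed} tree path), which is possible exactly because all arcs are forward or downward and all sources lie in row one, so the free arcs already have this property. (The comb tree above violates (A1): $v_{2,2}$ has no directed tree path from row one.) With (A1) in hand, the paper defines $\kappa_l$ and the paths $P_{l,i}$, proves $\kappa_l$ is non-decreasing and the $P_{l,i}$ form a forest of arborescences (Lemma~\ref{lemma:kappa}), classifies type~2 vertices into 2A/2B according to whether $P_{l,i}$ uses the deleted arc $a$, and shows: row one has no four vertices alternating between type 1 and type 2 (Lemma~\ref{lemma:row1}, the only place where Lemma~\ref{lem:intersect} is applied directly); the 2A vertices in any row form one interval (Lemma~\ref{lemma:2A}); and alternation between type 1 and type 2B in an interior row is lifted, via the starting vertices $v_{1,\kappa_l(\cdot)}$ of the paths $P_{l,\cdot}$, to a forbidden alternation in row one (Lemma~\ref{lemma:rowl}). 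Together these give at most two maximal intervals of one of the two types per row (Lemma~\ref{lemma:intervals}), hence $O(T^2)$ values from row one and $O(T^4)$ per remaining row. (Minor point: row one need not be a single interval of type 1; it may be the complement of one, which is still $O(T^2)$.) Without this special-tree construction and the lifting argument for interior rows, your outline does not close, so as written the proof has a genuine gap at its central step.
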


\noindent The main ingredient to prove Theorem~\ref{thm:CFGL} is to enumerate all possible flow values over any arc in all extreme points of $P_F$. This is summarized in the following proposition.

\begin{proposition} \label{prop:CFGLInflow}
In the U-MFG with sources in one row, the flow over any arc can take $O(T^{4L-2})$ different values in all extreme points of $P_F$, and these values can be enumerated in $O(T^{4L-2})$ time.
\end{proposition}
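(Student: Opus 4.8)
The plan is to exploit the flow formula~\eqref{eq:flow:a} in exactly the same spirit as in the proof of Theorem~\ref{thm:ccfgl_p}, but now the bound must be controlled by a purely combinatorial/topological argument rather than by restricted-arc values, since in the U-MFG all finite-capacity arguments are vacuous ($K=0$). Fix an extreme point $f$, a spanning-tree solution $\bT_f$ obtained via Proposition~\ref{prop:tree}, and an arc $a=(u,v)\in\bT_f$; removing $a$ splits $\bT_f$ into $\bT_{f,1}\ni u$ and $\bT_{f,2}\ni v$. Since every arc has infinite capacity, every nontree arc carries zero flow, so the cross terms $\sum_{e\in A_2}f_e-\sum_{e\in A_1}f_e$ vanish and~\eqref{eq:flow:a} collapses to $f_a=\sum_{v_{l,t}\in\bT_{f,1}}b_{l,t}$. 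Hence the task reduces entirely to bounding the number of distinct values of $\sum_{v_{l,t}\in\bT_{f,1}}b_{l,t}$ over all choices of $f$, $\bT_f$, and $a$, using only that all sources lie in a single row (say row $r$) and that $G$ is the planar grid.

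The key step is a structural claim: for any such $\bT_{f,1}$, the set of source vertices it contains is an interval in row $r$, and moreover the set of columns of row $r$ that belong to $\bT_{f,1}$ is a union of a constant (depending on $L$) number of intervals — in fact the boundary between $\bT_{f,1}$ and $\bT_{f,2}$ within row $r$ can change only a bounded number of times. To prove this I would use Lemma~\ref{lem:intersect} together with the planarity of $G$: the edge set separating $\bT_{f,1}$ from $\bT_{f,2}$ corresponds, in the planar dual, to a small number of curves, and because $\bT_f$ is a tree (so the cut $(\bT_{f,1},\bT_{f,2})$ is induced by deleting a single edge), the ``interface'' between the two parts is very restricted. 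Concretely, I expect to show that the vertices of $\bT_{f,1}$ in each fixed row $l$ form $O(1)$ intervals, with the total number of intervals across all $L$ rows bounded by $O(L)$, essentially because each additional interval would force two vertex-disjoint paths connecting four alternating boundary vertices, contradicting Lemma~\ref{lem:intersect}. Since the supplies are nonzero only in row $r$, only the row-$r$ intervals matter: if there are at most $O(L)$ such intervals, each specified by two column indices in $[T]$, there are $O(T^{O(L)})$ possibilities for the multiset of row-$r$ columns in $\bT_{f,1}$, and each determines $\sum_{v_{l,t}\in\bT_{f,1}}b_{l,t}$ uniquely. Matching the claimed bound $O(T^{4L-2})$ will require pinning the interval count down to exactly $2L-1$ (so $2(2L-1)=4L-2$ endpoints), which I would do by a careful accounting of how the tree structure limits the interface, likely distinguishing the row containing $u,v$ from the others.

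Having established that $f_a$ ranges over a set $S$ of size $O(T^{4L-2})$, the enumeration procedure is immediate and cheap: generate all tuples of up to $2L-1$ interval-endpoints in $[T]$ in each relevant row, form the corresponding partial sums of the supply vector (which can be precomputed via prefix sums in $O(LT)$ time), and collect the resulting values; this takes $O(T^{4L-2})$ time, giving the second half of the proposition. Theorem~\ref{thm:CFGL} then follows by plugging $M=O(T^{4L-2})$ into Theorem~\ref{prop:runtime}: the running time becomes $O\!\left(LT\cdot\bigl(T^{4L-2}\bigr)^{2L-2}\right)=O\!\left(LT^{(4L-2)(2L-2)+1}\right)=O\!\left(LT^{8L^2-12L+5}\right)$, as stated.

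The main obstacle I anticipate is the structural claim about the interface between $\bT_{f,1}$ and $\bT_{f,2}$ — specifically, proving the sharp constant $2L-1$ on the number of intervals rather than some weaker $O(L)$ or $O(L^2)$ bound. The topological intuition from Lemma~\ref{lem:intersect} (no two vertex-disjoint ``crossing'' paths) clearly rules out too much alternation, but turning that into the exact interval count, and correctly handling the placement of the deleted arc $a$ and of the unique source row $r$, is where the real work lies; everything downstream (the counting of tuples, the prefix-sum enumeration, the substitution into Theorem~\ref{prop:runtime}) is routine.
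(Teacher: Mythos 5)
There are two genuine gaps here, and the first is fatal to your counting. You assert that ``the supplies are nonzero only in row $r$, so only the row-$r$ intervals matter.'' That is false for this proposition: only the \emph{sources} are confined to one row; the sinks may sit in all $L$ rows (this is exactly the case the theorem is meant to cover --- if sources and sinks were both in one row the problem is already handled by~\cite{erickson1987send}). Hence $b_{l,t}\neq 0$ in every row, and the quantity $\sum_{v_{l,t}\in\bT_{f,1}}b_{l,t}$ depends on how $\bT_{f,1}$ meets \emph{each} row. The exponent $4L-2$ does not come from ``$2L-1$ intervals in a single row'' as you try to arrange; it comes from a \emph{product over rows}: row one contributes $O(T^2)$ choices (one interval or its complement, by Lemma~\ref{lem:intersect} applied to the outer face) and each of the other $L-1$ rows contributes $O(T^4)$ choices (at most two maximal intervals of one type, hence four endpoints), giving $O(T^2)\cdot O(T^4)^{L-1}=O(T^{4L-2})$.

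The second gap is your structural claim itself. For a spanning tree solution taken from Proposition~\ref{prop:tree} with no further properties, it is \emph{not} true that $\bT_{f,1}$ meets each row in $O(1)$ intervals, and Lemma~\ref{lem:intersect} cannot be invoked directly for an interior row: the four alternating vertices are not on the outer face, and the two separating paths can avoid each other by one detouring through the rows above and the other through the rows below (e.g.\ in a three-row grid, connect $v_{2,i}$ to $v_{2,i'}$ through row one and $v_{2,j}$ to $v_{2,j'}$ through row three). This is precisely why the paper does not use a generic tree: it constructs, via Algorithm~\ref{alg:spanning}, a spanning tree satisfying Property~\ref{property:accessible} (every vertex reachable from row one by a \emph{directed} tree path), which is possible exactly because all sources are in row one --- a hypothesis your argument never uses, which is a warning sign. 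Directedness forces the access paths $P_{l,i}$ to stay in rows $1,\dots,l$, so the alternation in row $l$ can be transported (via the monotone functions $\kappa_l$ and the type 2A/2B split of Lemmas~\ref{lemma:2A} and~\ref{lemma:rowl}) to an alternation in row one, where the outer-face argument of Lemma~\ref{lemma:row1} applies. Your observation that the cross terms in~\eqref{eq:flow:a} vanish in the uncapacitated case, and your substitution of $M=O(T^{4L-2})$ into Theorem~\ref{prop:runtime}, are both correct; but without the special spanning tree and the per-row product count, the core bound is not established.
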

\noindent Theorem~\ref{thm:CFGL} then follows immediately from Theorem~\ref{prop:runtime} and Proposition~\ref{prop:CFGLInflow}. The proof of Proposition~\ref{prop:CFGLInflow} is slightly technical and requires several new gadgets, which we introduce below. The proof can be summarized in two steps: (1) For each extreme point of $P_F$, construct a corresponding spanning tree solution with certain property; (2) With the constructed spanning tree solution, the flow over any arc can be written as a sum of supplies at vertices following a special pattern. Then we can enumerate the possible flow values in all extreme points in an efficient way, instead of simply enumerating all extreme points and calculating the flow values accordingly.

\subsubsection{A special spanning tree solution}
We first assume that all sources are in row one. Otherwise, we can remove all arcs in rows above the sources, without changing the optimal flow over the remaining arcs. We need the following concept before constructing the spanning tree with the required property.
\begin{definition}
Given a tree of the grid, a vertex is said to be \emph{accessible from row one}, if there exists a directed path in the tree starting from some vertex in row one to that vertex.
\end{definition}
\noindent Consider the spanning tree in Figure~\ref{fig:ucfgL}, the vertex $v_{l,i_1}$ is accessible from row one, since there is a directed path from a first-row vertex $v_{1,i_1}$ to $v_{l,i_1}$. In fact, each vertex in that tree is accessible from row one. We will construct a spanning tree with this property for every extreme point of $P_F$. In particular, given an extreme point $f$, let $A_f=\{ e \mid f_e >0\}$. By Proposition~\ref{prop:cyclefree}, $(V, A_f)$ is a forest. Our goal is to (possibly) include some restricted arcs not in $A_f$ to produce a spanning tree $\bT_f$ with the following property.
\begin{properties}
\item Every vertex $v$ in $G$ is accessible from row one. \label{property:accessible}
\end{properties}
This can be achieved by Algorithm~\ref{alg:spanning}.

\begin{algorithm} [htb]
\caption{Construct a spanning tree solution satisfying Property~\ref{property:accessible}}
\label{alg:spanning}
\begin{algorithmic}
\Require An extreme point $f$ of $P_F$
\Ensure A spanning tree $\bT_f = (V, A_{\bT_f})$ satisfying Property~\ref{property:accessible}
\State Initialization: $A_{\bT_f} \gets A_f$
\While{$(V,A_{\bT_f})$ is not a spanning tree}
\State Select an arc $e=(u_e, v_e) \notin A_{\bT_f}$ such that $u_e$ is accessible from row one and $(V, A_{\bT_f}\cup \{e\})$ does not contain an undirected cycle.
\State $A_{\bT_f} \gets A_{\bT_f}\cup \{e\}$
\EndWhile
\end{algorithmic}
\end{algorithm}

\begin{proposition}
The spanning tree constructed by Algorithm~\ref{alg:spanning} satisfies Property~\ref{property:accessible}.
\end{proposition}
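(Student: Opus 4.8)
The plan is to carry the argument as a loop invariant on the forest $(V,A_{\bT_f})$ maintained by Algorithm~\ref{alg:spanning}. Call a component of the current forest \emph{good} if it is a single vertex, or if it contains a vertex of row one and \emph{every} vertex in it is accessible from row one (in the current forest). I would establish: (i) every component of the initial forest $(V,A_f)$ is good; (ii) every iteration of the while loop preserves goodness of all components; (iii) while the forest is not yet spanning, the loop can always find an admissible arc. Granting these, the algorithm terminates with a spanning tree; that tree is a single component, and since it contains every vertex of row one, goodness forces every vertex of $G$ to be accessible from row one, which is precisely Property~\ref{property:accessible}.

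For (i), I would use Proposition~\ref{prop:cyclefree} to get that $(V,A_f)$ is a forest, and then exploit the flow structure of $f$: inside any component $C$, all arcs of $G$ with exactly one endpoint in $C$ carry zero flow (in the U-MFG a non-free arc carries zero flow since all capacities are infinite), so $C$ is self-contained for flow, and $G$ is acyclic. Given $v\in C$ not a source, $v$ has an incoming positive-flow arc --- otherwise, when $C$ is nontrivial, its positive out-flow would force $b_v>0$ --- and a backward walk along positive-flow arcs from $v$ cannot repeat a vertex, so it halts only at a source, which lies in row one. This yields a directed positive-flow path inside $C$ from a row-one vertex to $v$, so $v$ is accessible; applying this to an endpoint of any positive-flow arc of $C$ also shows a nontrivial $C$ meets row one. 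Hence every component is good.

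For (ii), consider the iteration adding $e=(u_e,v_e)$. By the selection rule $u_e$ is accessible, so by the invariant its component $C_a$ contains a row-one vertex and all of $C_a$ is accessible; since $e$ creates no cycle, $v_e$ lies in a different component $C'$, and the iteration replaces $C_a$ and $C'$ by $M:=C_a\cup C'\cup\{e\}$. All accessibility witnesses inside $C_a$ survive in $M$. If $C'$ already contained a row-one vertex, the invariant makes all of $C'$ accessible, and those witnesses survive too; if instead $C'=\{v_e\}$, then $v_e$ becomes accessible by prepending $e$ to a path witnessing the accessibility of $u_e$. Either way $M$ contains a row-one vertex with all vertices accessible, so $M$ is good, and no other component changes. (This also shows that the only singleton components ever present are original ones not yet absorbed.)

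For (iii), which is the one step using the grid's arc structure rather than flow arguments, I would argue by contradiction. Suppose the forest has at least two components and no arc of $A$ runs from a component containing a row-one vertex to a different component; such a component exists, e.g.\ the one containing $v_{1,1}$. Then that component is closed under both out-arcs (``step right'' and ``step down'') of each of its vertices, hence reaches every $v_{l,t}$ from $v_{1,1}$ and equals $V$, contradicting having at least two components. So an admissible arc $e=(u_e,v_e)$ always exists, each iteration lowers the component count by one, and the algorithm terminates with a spanning tree; by (i)--(ii) that tree is one good component containing row one, giving Property~\ref{property:accessible}. I expect step (i) --- the two intertwined consequences of per-component flow balance, that a nontrivial component must meet row one and is then fully accessible --- to be the fussiest part to write carefully; step (iii) is brief but is the only place the grid geometry is essential.
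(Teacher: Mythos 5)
Your proof is correct, and its core mechanism is the same as the paper's: show that accessibility from row one holds for the initial free-arc forest and is preserved each time an arc is added. The packaging differs in ways worth noting. The paper proves the statement by induction on the number of non-isolated vertices and implicitly commits to a particular arc-selection order (first absorb isolated vertices reachable from non-isolated ones, then seed from an isolated row-one vertex); it asserts the base case --- that every vertex incident to a positive-flow arc is accessible --- in one clause, ``due to the flow conservation constraints and the fact that all sources are in row one,'' and it does not separately argue that an admissible arc always exists when the forest is not yet spanning. Your component-level ``goodness'' invariant works for \emph{any} admissible choice of $e$ permitted by the pseudocode, which matches the algorithm as actually stated; your backward walk along positive-flow arcs (terminating at a source because $G$ is acyclic and every non-source with positive out-flow has positive in-flow) is exactly the missing justification of the base case; and your step (iii), using that the component of $v_{1,1}$ would otherwise be closed under right- and down-steps and hence equal $V$, supplies the progress/termination argument the paper leaves tacit. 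One cosmetic slip: you ``append'' (not ``prepend'') $e$ to the path witnessing accessibility of $u_e$. Also note your step (i) uses that all capacities are infinite so that every restricted arc carries zero flow; that is legitimate here since the algorithm is only invoked for the U-MFG, but it is worth flagging that dependence explicitly.
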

\begin{proof}
During the construction of the tree, we call a vertex in $G$ \emph{isolated} if the vertex is not adjacent to any arc in $A_{\bT_f}$ and \emph{non-isolated} otherwise. We prove the result by induction on the number of non-isolated vertices. During the initialization step, all non-isolated vertices, i.e., vertices incident to at least one arc in $A_{f}$, are accessible from row one, due to the flow conservation constraints and the fact that all sources are in row one. At each iteration, if there exists an arc $e=(u_e,v_e)$ not in $A_{\bT_f}$ such that $u_e$ is non-isolated and $v_e$ is isolated, then add arc $e$ into $A_{\bT_f}$. Since $u_e$ is accessible from row one, vertex $v_e$ becomes non-isolated and is accessible from row one as well. If such an arc does not exist, then there must exist an isolated vertex $v$ that is in row one. Select one of the two outgoing arcs of vertex $v$ as arc $e$, and include it into $A_{\bT_f}$. Then $v$ is accessible from row one by definition.
\end{proof}

The constructed spanning tree has some additional nice properties. To establish those properties, we first define an auxiliary function $\kappa_l: [T] \rightarrow [T] $ for each $l\in [L]$.
\begin{definition} Given a spanning tree $\bT_f$ with Property~\ref{property:accessible}, for each $l\in [L]$,
\begin{enumerate}
	\item the value $\kappa_l(i)$ is the smallest integer $j$ such that there is a directed path from $v_{1, j}$ to $v_{l, i}$ in $\bT_f$.
	\item the path $P_{l,i}$ denotes the unique directed path in $\bT_f$ from $v_{1, \kappa_l(i)}$ to $v_{l, i}$.
\end{enumerate} 
\end{definition}

\begin{figure}[htb]
    \centering
    \includegraphics[scale=0.65]{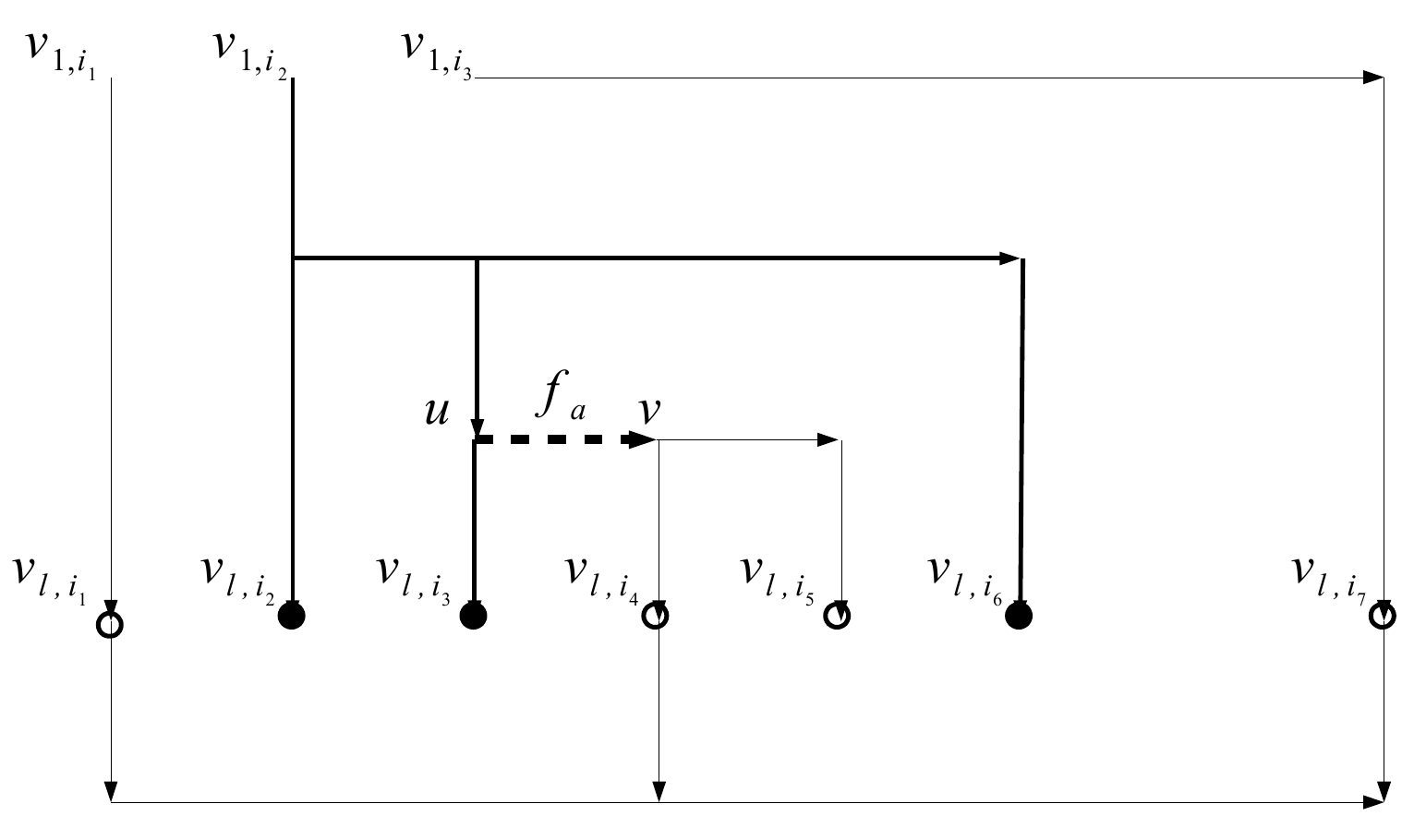}        
    \caption{The spanning tree $\bT_f$ satisfying Property~\ref{property:accessible}, the given arc $a=(u,v)$, and vertices of type 1 (hollow dots) and type 2 (solid dots) in row $l$}
    \label{fig:ucfgL}
\end{figure}

\noindent For the example in Figure~\ref{fig:ucfgL}, $\kappa_l(i_1)=i_1$, $\kappa_l(i_2)=\ldots =\kappa_l(i_6)=i_2$, and $\kappa_l(i_7)=i_3$. The path $P_{l,i_1}$ consists of downward arcs $(v_{1,i_1}, v_{2,i_1}), \ldots, (v_{l-1,i_1}, v_{l,i_1})$, and the path $P_{l,i_7}$ consists of forward arcs $(v_{1,i_3},v_{1,i_4}), \ldots, (v_{1,i_6},v_{1,i_7})$ and downward arcs $(v_{1,i_7},v_{2,i_7}), \ldots, (v_{l-1,i_7},v_{l,i_7})$.

\begin{lemma}
\label{lemma:kappa} $\;$
\begin{enumerate}
	\item For each $l\in [L]$, the function $\kappa_l$ is non-decreasing.
	\item The union of paths $\set{P_{l,i}: i \in [T]}$ forms a forest of arborescences. 
\end{enumerate} 
\end{lemma}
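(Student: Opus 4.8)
The plan is to establish the two assertions separately. Assertion~1 (monotonicity of $\kappa_l$) will rest on the planar crossing Lemma~\ref{lem:intersect}; Assertion~2 (the arborescence structure) needs only the elementary observation that every arc of $G$ strictly increases the quantity ``row index $+$ column index'', together with the uniqueness of paths inside the tree $\bT_f$.

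\emph{Assertion 1.} I would argue by contradiction: suppose $\kappa_l(i)>\kappa_l(i+1)$ for some $l$ and $i$, and set $P=P_{l,i}$, $P'=P_{l,i+1}$. Since directed paths in $G$ are column-nondecreasing, $\kappa_l(i)\le i$, so the relevant column indices satisfy $\kappa_l(i+1)<\kappa_l(i)\le i<i+1$. Extend $P$ and $P'$ down to the outer face by appending the downward arcs $v_{l,i}\to v_{l+1,i}\to\cdots\to v_{L,i}$ and $v_{l,i+1}\to\cdots\to v_{L,i+1}$, obtaining a path $\hat P$ between $v_{1,\kappa_l(i)}$ and $v_{L,i}$ and a path $\hat P'$ between $v_{1,\kappa_l(i+1)}$ and $v_{L,i+1}$. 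These four endpoints occur in the clockwise order $v_{1,\kappa_l(i+1)},\,v_{1,\kappa_l(i)},\,v_{L,i+1},\,v_{L,i}$ on the boundary of the outer face, so Lemma~\ref{lem:intersect} supplies a vertex common to $\hat P$ and $\hat P'$. Because every vertex of $P$ and of $P'$ has row index at most $l$, while the two appended tails have row index at least $l$ and lie in the distinct columns $i$ and $i+1$, that common vertex must in fact lie on $P\cap P'$; call it $w$. Splicing the portion of $P'$ from $v_{1,\kappa_l(i+1)}$ to $w$ with the portion of $P$ from $w$ to $v_{l,i}$ yields a directed walk which, since ``row$+$column'' strictly increases along every arc, cannot repeat a vertex and is therefore a directed path from $v_{1,\kappa_l(i+1)}$ to $v_{l,i}$. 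Hence $\kappa_l(i)\le\kappa_l(i+1)$, the desired contradiction.

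\emph{Assertion 2.} I would prove the stronger statement that every vertex has in-degree at most $1$ in $\bigcup_{i\in[T]}P_{l,i}$. This suffices: the union is a subgraph of the tree $\bT_f$, hence a forest, and a forest whose arcs are oriented so that no vertex has two incoming arcs is a disjoint union of arborescences (in each tree component, counting arcs shows a unique vertex has in-degree $0$, and following incoming arcs backward from any other vertex must reach it, giving an out-tree). So suppose some $w=v_{l',t'}$ had in-degree $\ge2$; its only possible incoming arcs in $G$ are $d=(v_{l'-1,t'},w)$ and $h=(v_{l',t'-1},w)$, so $d$ lies on some $P_{l,i}$ and $h$ lies on some $P_{l,j}$, and $w$ lies on both of these paths. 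Splicing (exactly as in Assertion~1) the portion of $P_{l,j}$ up to $w$ with the portion of $P_{l,i}$ from $w$ onward gives a directed path from $v_{1,\kappa_l(j)}$ to $v_{l,i}$, whence $\kappa_l(i)\le\kappa_l(j)$; the symmetric splice gives $\kappa_l(j)\le\kappa_l(i)$, so $\kappa_l(i)=\kappa_l(j)$. Then the portion of $P_{l,i}$ up to $w$ and the portion of $P_{l,j}$ up to $w$ are two directed paths from the common source $v_{1,\kappa_l(i)}$ to $w$ inside the tree $\bT_f$, hence the same path; in particular they enter $w$ through the same arc, contradicting $d\ne h$.

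The step I expect to be the main obstacle is the geometric bookkeeping in Assertion~1: confirming that the vertex furnished by Lemma~\ref{lem:intersect} for the \emph{extended} paths truly lies on the original paths $P$ and $P'$ and not on an appended tail, and verifying that the cyclic order of the four boundary endpoints is exactly the alternating one the lemma demands (this is where both $\kappa_l(i)\le i$ and $\kappa_l(i)>\kappa_l(i+1)$ get used). Assertion~2, once rephrased as an in-degree bound, then falls out cleanly from the ``strictly increasing row$+$column'' property and the uniqueness of tree paths, with no appeal to planarity.
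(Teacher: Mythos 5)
Your proof is correct. For the monotonicity of $\kappa_l$ you use the same key ingredients as the paper: Lemma~\ref{lem:intersect} applied to $P_{l,i}$ and $P_{l,i+1}$ via four alternating vertices on an outer face, followed by a splice at the common vertex that contradicts the minimality in the definition of $\kappa_l$. The only difference is the device for putting the endpoints on a boundary: the paper restricts to the planar subgraph on rows $1,\dots,l$, so that $v_{l,i}$ and $v_{l,i+1}$ already lie on \emph{its} outer face, whereas you extend the two paths down to row $L$ along (possibly non-tree) downward arcs and work in the full grid. Both maneuvers are legitimate since Lemma~\ref{lem:intersect} makes no reference to the tree, and your case analysis showing that the crossing vertex cannot lie only on an appended tail is exactly the bookkeeping that this variant requires. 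For the second assertion the paper shows that two paths $P_{l,i}$ and $P_{l,i'}$ sharing a vertex must share their starting vertex in row one and then asserts the arborescence structure; you instead bound the in-degree of every vertex of the union by one and note that a forest with this property decomposes into arborescences. The two routes are equivalent in substance—both splice at a common vertex and invoke uniqueness of paths in $\bT_f$—but yours makes explicit the final step the paper leaves to the reader, at the cost of the extra (easy) observation that row-plus-column strictly increases along every arc, which is what guarantees your spliced walks are simple and avoids any appeal to planarity in this part.
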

\begin{proof}
Fix $l\in [L]$. Suppose for some $i < i'$ we have $\kappa_l(i) = j > \kappa_l(i') = j'$. Consider the two directed paths $P_{l,i}$ and $P_{l,i'}$ in $\bT_f$. We claim they cannot be vertex-disjoint. To see this,  first the two paths do not contain any arcs below row $l$, since all arcs are either forward or downward in $G$. Consider the subgraph of $G$ consisting of only vertices and arcs in row one to row $l$. It is also planar, and vertices $v_{1,j'}, v_{1,j}, v_{l,i'}$, and $v_{l,i}$ lie clockwise on the boundary of its outer face. By Lemma~\ref{lem:intersect}, the two paths $P_{l,i}$ and $P_{l,i'}$ cannot be vertex-disjoint. Then there will be a directed path from $v_{1,j'}$ to $v_{l, i}$ in $\bT_f$, contradicting the definition of $\kappa_l(i)$. 

To prove the second statement, it suffices to show that if $P_{l,i}$ and $P_{l,i'}$ are not vertex-disjoint, then they must start from the same vertex in the first row. Suppose they start from two different vertices, $v_{1, j}$ and $v_{1,j'}$ with $j < j'$. Since the two paths are not vertex-disjoint, there is a directed path from $v_{1,j}$ to $v_{l, i'}$. The path $P_{l,i'}$ should start from $v_{1,j}$ instead of $v_{1,j'}$, a contradiction. 
\end{proof}
\noindent Consider the example in Figure~\ref{fig:ucfgL}. For vertices in row $l$, $\kappa_l(i_1)=i_1$, $\kappa_l(i_2)=\ldots =\kappa_l(i_6)=i_2$, and $\kappa_l(i_7)=i_3$, so $\kappa_l$ is non-decreasing; the union of paths $P_{l,i}$ ($i\in [T]$) forms three arborescences: the directed path from $v_{1,i_1}$ to $v_{l,i_1}$, the arborescence rooted at $v_{1,i_2}$ with leaves $v_{l,i_2}$ to $v_{l,i_6}$, and the directed path from $v_{1,i_3}$ to $v_{l,i_7}$.

\subsubsection{The flow value $f_a$}
It is known that if we delete an arc $a=(u,v)$ from a spanning tree $\bT_f$, the tree will break into two subtrees $\bT_{f,1}$ (containing vertex $u$) and $\bT_{f,2}$ (containing vertex $v$). We classify all vertices in $V$ according to their locations in the subtrees.
\begin{definition}
Given an arc $a=(u,v)\in A$ and a spanning tree $\bT_f$ with Property~\ref{property:accessible},
\begin{enumerate}
	\item a vertex is of \emph{type 1} if it is contained in $\bT_{f,1}$, and of \emph{type 2} if it is contained in $\bT_{f,2}$;
	\item a type 2 vertex is of \emph{type 2A}, if the directed path in $\bT_f$ from the vertex in row one to that vertex contains arc $a$; a type 2 vertex is of \emph{type 2B} otherwise.
\end{enumerate}
\end{definition}

\noindent Consider the example in Figure~\ref{fig:ucfgL} with arc $a= (u,v)$. For vertices in row one, vertices $v_{1,i_1}$ and $v_{1,i_3}$ are of type 2, and vertex $v_{1,i_2}$ is of type 1. For vertices in row $l$, vertices $v_{l,i_2}, v_{l,i_3}$, and $v_{l,i_6}$ are of type 1, $v_{l,i_4}$ and $v_{l,i_5}$ are of type 2A, and $v_{l,i_1}$ and $v_{l,i_7}$ are of type 2B.

We introduce several lemmas to characterize the location of vertices of each type in a given row. These lemmas will be used later for enumerating the values of $f_a$ in all extreme points.
\begin{lemma}
\label{lemma:row1}
Given an arc $a=(u,v)\in A$ and a spanning tree $\bT_f$ with Property~\ref{property:accessible}, there cannot be four vertices in row one that alternate between type 1 and type 2.
\end{lemma}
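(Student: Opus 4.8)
The plan is to argue by contradiction, exploiting the non-crossing property of paths in a plane graph (Lemma~\ref{lem:intersect}) together with the fact that deleting a single arc from the spanning tree $\bT_f$ splits the vertex set into two disjoint parts. Suppose there were four vertices in row one, say $v^1 = v_{1,j_1}$, $v^2 = v_{1,j_2}$, $v^3 = v_{1,j_3}$, $v^4 = v_{1,j_4}$ with $j_1 < j_2 < j_3 < j_4$, whose types alternate. I would assume without loss of generality that $v^1$ and $v^3$ are of type~1 (hence $v^1, v^3 \in \bT_{f,1}$) while $v^2$ and $v^4$ are of type~2 (hence $v^2, v^4 \in \bT_{f,2}$); the reversed alternation pattern is handled identically by swapping the roles of the two subtrees.

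Next I would produce the two paths. Since $\bT_{f,1}$ is a subtree of $\bT_f$ containing both $v^1$ and $v^3$, it contains the unique $\bT_f$-path $P_1$ joining them, and $P_1$ lies entirely inside $\bT_{f,1}$. Likewise $\bT_{f,2}$ contains a path $P_2$ joining $v^2$ and $v^4$ and lying entirely inside $\bT_{f,2}$. Both $P_1$ and $P_2$ are paths in the plane graph $G$, and the four vertices $v^1, v^2, v^3, v^4$ all lie on row one, which is part of the boundary of the outer face of $G$, occurring along that boundary in the cyclic order $v^1, v^2, v^3, v^4$. Applying Lemma~\ref{lem:intersect} to the pair $P_1$ (between $v^1$ and $v^3$) and $P_2$ (between $v^2$ and $v^4$) then yields that $P_1$ and $P_2$ must share a vertex.

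Finally I would derive the contradiction: deleting the arc $a = (u,v)$ from the spanning tree $\bT_f$ partitions $V$ into the vertex set of $\bT_{f,1}$ and the vertex set of $\bT_{f,2}$, which are therefore disjoint. Hence no vertex can lie on both $P_1 \subseteq \bT_{f,1}$ and $P_2 \subseteq \bT_{f,2}$, contradicting the previous paragraph. Therefore no such alternating quadruple can exist in row one.

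The only step requiring genuine care is confirming that the four row-one vertices sit on the outer-face boundary in the cyclic order needed to invoke Lemma~\ref{lem:intersect}; this is immediate from the planar embedding in Figure~\ref{fig:grid}, where the entire first row forms the top side of the bounding rectangle and hence lies on the outer face, with the column index increasing monotonically along it. Everything else is a direct use of connectivity of the two subtrees and the deletion of $a$. I would also note that Property~\ref{property:accessible} is not actually needed for this particular lemma; only the tree structure of $\bT_f$ is used, and the argument applies to any spanning tree solution associated with the extreme point.
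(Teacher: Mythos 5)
Your proof is correct and follows essentially the same route as the paper: take the alternating quadruple, connect the two type-1 vertices by a path inside $\bT_{f,1}$ and the two type-2 vertices by a path inside $\bT_{f,2}$, invoke Lemma~\ref{lem:intersect} using the fact that all four vertices lie in cyclic order on the outer face, and contradict the disjointness of the two subtrees. Your side remark that Property~\ref{property:accessible} is not needed here is also accurate; the paper's argument likewise uses only the tree structure.
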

\begin{proof}
Suppose there exist $v_{1,i}, v_{1, j}, v_{1, i'}$, and $v_{1, j'}$ with $i < j < i' < j'$ such that $v_{1,i}$ and $v_{1,i'}$ are of the same type and $v_{1,j}$ and $v_{1,j'}$ are of another type. WLOG, assume $v_{1,i}$ and $v_{1,i'}$ are of type 1 and $v_{1,j}$ and $v_{1,j'}$ are of type 2. Then there is an undirected path between $v_{1,i}$ and $v_{1,i'}$ in $\bT_{f,1}$ and an undirected path between $v_{1,j}$ and $v_{1,j'}$ in $\bT_{f,2}$. These two paths cannot be vertex-disjoint by Lemma~\ref{lem:intersect}. That makes $\bT_{f,1}$ and $\bT_{f,2}$ connected, a contradiction.
\end{proof}

\begin{lemma}
\label{lemma:2A}
Given an arc $a=(u,v)\in A$ and a spanning tree $\bT_f$ with Property~\ref{property:accessible}, if $v_{l,i}$ and $v_{l,j}$ are of type 2A with $i<j$, then $v_{l, k}$ is also of type 2A, for any $k$ such that $i<k<j$ and each $l \in [L]$. 
\end{lemma}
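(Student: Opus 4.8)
The statement I want to prove is Lemma~\ref{lemma:2A}: if $v_{l,i}$ and $v_{l,j}$ are of type 2A with $i<j$, then every $v_{l,k}$ with $i<k<j$ is also of type 2A. The plan is to argue by contradiction, exploiting the planarity of the grid together with Lemma~\ref{lem:intersect}, exactly in the spirit of the proofs of Lemma~\ref{lemma:kappa} and Lemma~\ref{lemma:row1}. So suppose $v_{l,k}$ is \emph{not} of type 2A for some $i<k<j$; then $v_{l,k}$ is either of type 1 or of type 2B. I will treat these two sub-cases and derive a contradiction in each.

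First I would set up the relevant directed paths. Since $v_{l,i}$ and $v_{l,j}$ are of type 2A, by definition the directed path from a first-row vertex to $v_{l,i}$ in $\bT_f$ passes through $a=(u,v)$, and similarly for $v_{l,j}$; in particular, appending to $a$ the remaining segments, there is a directed path $Q_i$ in $\bT_f$ from $v$ to $v_{l,i}$ and a directed path $Q_j$ from $v$ to $v_{l,j}$, both lying entirely in $\bT_{f,2}$ and using only forward and downward arcs (hence staying in rows $1$ through $l$ if $v$ is in a row $\le l$; more carefully, since they terminate in row $l$ and all arcs are forward/downward, they never go below row $l$). Now consider the sub-case $v_{l,k}$ of type~1. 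Then $v_{l,k}\in \bT_{f,1}$, and by Property~\ref{property:accessible} there is a directed path $R$ in $\bT_f$ from some first-row vertex to $v_{l,k}$. This path $R$ cannot meet $\bT_{f,2}$, so in particular $R$ is vertex-disjoint from $Q_i\cup Q_j$. Restrict attention to the planar subgraph on rows $1$ through $l$. The four vertices involved — the starting first-row vertex of $R$, the branch structure of $Q_i,Q_j$ at $v$, and the endpoints $v_{l,i},v_{l,k},v_{l,j}$ in row $l$ — are arranged so that the path joining (an ancestor of) $v_{l,i}$ and $v_{l,j}$ and the path joining the first-row vertex and $v_{l,k}$ are forced to be non-disjoint by Lemma~\ref{lem:intersect}, since $v_{l,k}$ lies strictly between $v_{l,i}$ and $v_{l,j}$ in row $l$ while $R$'s other endpoint is in row one: the two endpoints of one path separate the two endpoints of the other along the outer boundary. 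This contradicts $R$ being disjoint from $\bT_{f,2}$.

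For the sub-case $v_{l,k}$ of type~2B: now $v_{l,k}\in\bT_{f,2}$, so the directed path $S$ from its first-row ancestor to $v_{l,k}$ does \emph{not} use arc $a$; equivalently $S$ lies in $\bT_{f,2}$ but its first-row endpoint and the whole of $S$ avoid $v$ (and avoid $a$). On the other hand $Q_i$ and $Q_j$ both emanate from $v$. The key point is that $\bT_f$ is a tree, so the directed path from a first-row vertex down to any vertex is unique; the uniqueness plus the branching at $v$ means $S$ cannot pass through $v$, yet $S$ must reach $v_{l,k}$, which lies between $v_{l,i}$ and $v_{l,j}$ in row $l$. Tracing $Q_i$ and $Q_j$ from their common vertex $v$ out to $v_{l,i}$ and $v_{l,j}$ carves the planar region bounded by rows $1$ through $l$ into pieces, and $v_{l,k}$ sits in row $l$ between the endpoints; the unique path $S$ into $v_{l,k}$ from row one must cross $Q_i\cup Q_j$ (again by Lemma~\ref{lem:intersect}, applied to $S$ versus whichever of $Q_i,Q_j$ separates $v_{l,k}$ from $S$'s first-row endpoint), forcing $S$ to share a vertex with $Q_i$ or $Q_j$. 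But then, since $\bT_f$ is a tree and all these paths are directed with arcs forward/downward, that shared vertex would force $S$ to continue along $Q_i$ or $Q_j$ through $v$ and hence through $a$, making $v_{l,k}$ type 2A after all — a contradiction.

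The main obstacle I anticipate is bookkeeping the planarity argument cleanly: one must be careful about which four vertices to feed into Lemma~\ref{lem:intersect} and verify they genuinely lie in clockwise order on the outer boundary of the rows-$1$-through-$l$ subgraph, and one must correctly handle the degenerate possibilities (e.g., $v$ itself lying in row $l$, or $Q_i$ and $Q_j$ sharing an initial segment beyond $v$, or the first-row ancestors of $v_{l,i}$ and $v_{l,j}$ coinciding). In every case the restriction to rows $1$ through $l$ keeps the picture planar with all the named vertices on the outer face, and the ``between-ness'' of $v_{l,k}$ in row $l$ is exactly what produces the alternating pattern needed to invoke Lemma~\ref{lem:intersect}; the tree structure of $\bT_f$ then converts ``paths intersect'' into ``the path into $v_{l,k}$ uses $a$,'' closing the contradiction.
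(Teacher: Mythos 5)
Your overall strategy (case analysis on whether $v_{l,k}$ is of type 1 or type 2B, with planarity via Lemma~\ref{lem:intersect}) differs from the paper's, which instead works through the function $\kappa_l$: since $P_{l,i}$ and $P_{l,j}$ both contain $a$ they share the vertex $v$, so by the second statement of Lemma~\ref{lemma:kappa} they have a common first-row origin, and by monotonicity of $\kappa_l$ the path $P_{l,k}$ has that same origin $v_{1,\kappa_l(i)}$, from which the paper concludes $P_{l,k}$ contains $a$. Your type-1 sub-case is sound: the tree path between $v_{l,i}$ and $v_{l,j}$ lies in $\bT_{f,2}$ within rows $1$ through $l$, the path from row one to $v_{l,k}$ lies in $\bT_{f,1}$, and the cyclic order (first-row vertex, $v_{l,j}$, $v_{l,k}$, $v_{l,i}$) on the outer boundary of the rows-$1$-to-$l$ subgraph lets Lemma~\ref{lem:intersect} force an intersection between the two components, a contradiction.

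The type-2B sub-case, however, has a genuine gap. Having forced $S=P_{l,k}$ to meet $Q_i\cup Q_j$ at some vertex $z$, you conclude that $S$ is ``forced to continue along $Q_i$ or $Q_j$ through $v$ and hence through $a$.'' This does not follow. First, $Q_i$ and $Q_j$ are directed \emph{away} from $v$, so a directed path reaching $z\in Q_i$ and continuing forward moves toward $v_{l,i}$, not back to $v$; nothing prevents the initial segment of $S$ from merging with $Q_i$ strictly downstream of $v$. Second, even passing through $v$ would not imply using $a$: in the 2B case $S$ lies entirely in $\bT_{f,2}$, so its first-row origin is in $\bT_{f,2}$ and $S$ could enter $v$ via $v$'s other incoming tree arc, whose tail is also in $\bT_{f,2}$. (Relatedly, your claim that ``the directed path from a first-row vertex down to any vertex is unique'' is false: a vertex may be reachable from several first-row vertices, which is precisely why the paper defines $\kappa_l$ as the smallest such index.) The missing ingredient is the paper's: establish $\kappa_l(i)=\kappa_l(k)=\kappa_l(j)$ from Lemma~\ref{lemma:kappa}, so that $P_{l,k}$ starts at $v_{1,\kappa_l(i)}$, a vertex of $\bT_{f,1}$; then if $v_{l,k}\in\bT_{f,2}$ the path $P_{l,k}$ must cross from $\bT_{f,1}$ to $\bT_{f,2}$, and the only tree arc between the two components is $a$, which kills type 2B immediately.
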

\begin{proof}
Consider the directed paths $P_{l,i}$ and $P_{l,j}$ that go from some vertices in row one to $v_{l,i}$ and $v_{l,j}$ in $\bT_f$, respectively. By the definition of type 2A vertices, both $P_{l,i}$ and $P_{l,j}$ contain arc $a$. Then $\kappa_l(i) = \kappa_l(j)$, following from the second statement of Lemma~\ref{lemma:kappa}. Since $i<k<j$, $\kappa_l(i) \le \kappa_l(k) \le \kappa_l(j)$ by the first statement of Lemma~\ref{lemma:kappa}. Then $\kappa_l(i) =\kappa_l(k) = \kappa_l(j)$. The directed paths that connect vertices in row one to $v_{l,i}$, $v_{l,k}$, and $v_{l,j}$ start from the same vertex $v_{1,\kappa_l(i)}$ in row one. Since the paths $P_{l,i}$ and $P_{l,j}$ both contain arc $a$ and there is only one path in $\bT_{f}$ from vertex $v_{1,\kappa_l(i)}$ to vertex $v$ (the head of $a$), the segment in $P_{l,i}$ from $v_{1,\kappa_l(i)}$ to $v$ should be the same as the segment in $P_{l,j}$ from $v_{1,\kappa_l(i)}$ to $v$, which is the same as the segment in $P_{l,k}$ from $v_{1,\kappa(i)}$ to $v$ as well. Thus $P_{l,k}$ contains arc $a$ and $v_{l,k}$ is also of type 2A.
\end{proof}

\begin{lemma}
\label{lemma:rowl}
Given an arc $a=(u,v)\in A$ and a spanning tree $\bT_f$ with Property~\ref{property:accessible}, there cannot be four vertices in row $l $ ($l>1$) that alternate between type 1 and type 2B.
\end{lemma}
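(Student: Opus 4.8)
The plan is to lift the claim about row $l$ to the already-proved claim about row one (Lemma~\ref{lemma:row1}), using the directed ``witness path'' $P_{l,m}$ that Property~\ref{property:accessible} attaches to each vertex $v_{l,m}$, together with the monotonicity of $\kappa_l$ from Lemma~\ref{lemma:kappa}. Concretely, for each row-$l$ vertex I will identify its \emph{root} $v_{1,\kappa_l(m)}$ in row one, determine the type of that root from the type of $v_{l,m}$, and then observe that an alternating quadruple in row $l$ would produce an alternating quadruple in row one.

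The one substantive step is the following: for a vertex $v_{l,m}$ of type 1, the path $P_{l,m}$ does not contain the cut arc $a=(u,v)$. Indeed, if it did, then since $P_{l,m}$ is a directed path and $a$ is directed, $P_{l,m}$ would traverse $a$ from $u$ to $v$; as $a$ is the unique arc of $\bT_f$ joining $\bT_{f,1}\ni u$ and $\bT_{f,2}\ni v$, the suffix of $P_{l,m}$ after $v$ would lie in $\bT_{f,2}$, so $v_{l,m}\in\bT_{f,2}$ and in fact $v_{l,m}$ would be of type 2A --- contradicting that it is of type 1. Hence $P_{l,m}$ avoids $a$, so it lies entirely within $\bT_{f,1}$, and therefore its starting vertex $v_{1,\kappa_l(m)}$ is of type 1. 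Symmetrically, if $v_{l,m}$ is of type 2B then by definition $P_{l,m}$ avoids $a$, so $P_{l,m}\subseteq\bT_{f,2}$ and $v_{1,\kappa_l(m)}$ is of type 2. I expect this step --- really just the observation that a directed row-one-to-$v_{l,m}$ path through $a$ forces type 2A --- to be the only place requiring care; everything afterward is bookkeeping.

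To finish, suppose for contradiction that row $l$ contains four vertices alternating between type 1 and type 2B; write them as $v_{l,i},v_{l,j},v_{l,i'},v_{l,j'}$ with $i<j<i'<j'$, and (the two parities being symmetric) assume $v_{l,i},v_{l,i'}$ are of type 1 and $v_{l,j},v_{l,j'}$ of type 2B. By the previous paragraph, $v_{1,\kappa_l(i)}$ and $v_{1,\kappa_l(i')}$ are of type 1 while $v_{1,\kappa_l(j)}$ and $v_{1,\kappa_l(j')}$ are of type 2. By Lemma~\ref{lemma:kappa}, $\kappa_l$ is non-decreasing, so $\kappa_l(i)\le\kappa_l(j)\le\kappa_l(i')\le\kappa_l(j')$. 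If two consecutive entries of this chain were equal, the corresponding row-one vertices would coincide while the previous paragraph assigns them opposite types, which is impossible; hence the four values are distinct, so strictly increasing, and $v_{1,\kappa_l(i)},v_{1,\kappa_l(j)},v_{1,\kappa_l(i')},v_{1,\kappa_l(j')}$ are four row-one vertices listed in left-to-right order that alternate between type 1 and type 2, contradicting Lemma~\ref{lemma:row1}. Note that no new planarity argument is needed: the planarity is already used inside Lemma~\ref{lemma:row1} (and, through it, Lemma~\ref{lem:intersect}), and here we only reduce to it.
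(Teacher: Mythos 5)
Your proof is correct and follows essentially the same route as the paper's: map each row-$l$ vertex to its row-one root via $P_{l,\cdot}$, show type 1 and type 2B roots inherit types 1 and 2 respectively because those paths avoid $a$, and invoke the monotonicity of $\kappa_l$ together with Lemma~\ref{lemma:row1}. If anything, you are slightly more careful than the paper in two places: you justify why a type 1 vertex's path cannot contain $a$ (the paper asserts this "by definition"), and you explicitly rule out equal $\kappa_l$ values to get the strict ordering of the four row-one roots.
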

\begin{proof}
Suppose there exist $v_{l,i}, v_{l, j}, v_{l, i'}$, and $v_{l, j'}$ with $i < j < i' < j'$ such that $v_{l,i}$ and $v_{l,i'}$ are of the same type and $v_{l,j}$ and $v_{l,j'}$ are of another type. WLOG, assume $v_{l,i}$ and $v_{l,i'}$ are of type 1 and $v_{l,j}$ and $v_{l,j'}$ are of type 2B. Consider the four directed paths $P_{l,i}, P_{l,j}, P_{l,i'}, P_{l,j'}$ in $\bT_f$. None of these paths contains arc $a$ according to the definitions of type 1 and type 2B vertices, so $P_{l,i}$ and $P_{l,i'}$ are contained entirely in $\bT_{f,1}$, and $P_{l,j}$ and $P_{l,j'}$ are contained entirely in $\bT_{f,2}$. Consider the starting vertices of the four directed paths: $v_{1, \kappa_l(i)}$ and $v_{1, \kappa_l(j)}$ are of type 1, and $v_{1, \kappa_l(i')}$ and $v_{1, \kappa_l(j')}$ are of type 2; moreover, $\kappa_l(i) < \kappa_l(j) < \kappa_l(i') < \kappa_l(j')$ by Lemma~\ref{lemma:kappa}. Thus we found four vertices in row one that alternate between type 1 and type 2, which contradicts Lemma~\ref{lemma:row1}.
\end{proof}

Given an arc $a=(u,v)\in A$ and a spanning tree $\bT_f$ with Property~\ref{property:accessible}, the vertices in row $l$ for each $l\in [L]$ can be partitioned into groups of consecutive type 1 and type 2 vertices. 
\begin{definition}
Given an arc $a=(u,v)\in A$ and a spanning tree $\bT_f$ with Property~\ref{property:accessible}, an \emph{interval of type 1 (type 2, type 2A, type 2B)} in row $l$ ($l\in [L]$) is a set of adjacent vertices $\{v_{l,i} \mid i_1 \le i \le j_1 \text{ for some } i_1, j_1\}$ of type 1 (type 2, type 2A, type 2B). An interval is \emph{maximal} if it is not contained in any larger interval of the same type. 
\end{definition}
\noindent Consider the example in Figure~\ref{fig:ucfgL}. Vertices $v_{l,i_2}$ and $v_{l,i_3}$ form a maximal interval of type 1, and vertices $v_{l,i_4}$ and $v_{l,i_5}$ form a maximal interval of type 2A. The following lemma shows that the number of maximal intervals in each row is small.
\begin{lemma}
\label{lemma:intervals}
Given an arc $a=(u,v)\in A$ and a spanning tree $\bT_f$ with Property~\ref{property:accessible}, row $l$ has at most two maximal intervals of type 2, or at most two maximal intervals of type 1, for each $l\in [L]$.
\end{lemma}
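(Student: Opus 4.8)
The plan is to prove the contrapositive for each row $l$: supposing row $l$ has at least three maximal intervals of type 1 \emph{and} at least three maximal intervals of type 2, I will produce a configuration forbidden by the structural lemmas. For $l=1$ this is immediate from Lemma~\ref{lemma:row1}: the maximal intervals of row 1 alternate in type along the columns, so three of each means there are at least four of them, and picking one vertex from each of the first four produces four vertices of row 1 that alternate between type 1 and type 2 in column order, contradicting Lemma~\ref{lemma:row1}. (In fact this shows one of the two counts is at most one for row 1.)

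For $l>1$ I would reduce everything to the type 1 / type 2B alternation that Lemma~\ref{lemma:rowl} forbids. By Lemma~\ref{lemma:2A}, the type 2A vertices of row $l$ occupy a single contiguous block, hence lie inside one maximal type 2 interval; so, writing $T^{(1)}<\cdots<T^{(q)}$ for the maximal type 2 intervals of row $l$ ordered by column (with $q\ge 3$), at least $q-1$ of them consist purely of type 2B vertices. Then I would organize the maximal type 1 intervals into ``slots'': one to the left of $T^{(1)}$, one strictly between $T^{(j)}$ and $T^{(j+1)}$ for each $j\in[q-1]$, and one to the right of $T^{(q)}$. Each of the $q-1$ interior slots contains exactly one maximal type 1 interval --- it is nonempty because two consecutive maximal type 2 intervals are separated by a vertex which must be type 1, and it is a single interval by alternation --- while each of the two outer slots contains at most one. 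Hence the number of maximal type 1 intervals is $q-1$ plus the number of nonempty outer slots, so the assumption that this number is at least $3$ forces either $q\ge4$, or $q=3$ with at least one outer slot nonempty.

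It then remains to exhibit two purely type 2B intervals $A<B$ among the $T^{(j)}$, together with two maximal type 1 intervals, one lying strictly between $A$ and $B$ in column order and one lying strictly to the left of $A$ or strictly to the right of $B$; picking a vertex from each of the four intervals then gives four vertices of row $l$ alternating between type 1 and type 2B, contradicting Lemma~\ref{lemma:rowl}. If $q\ge4$, then at least three of $T^{(1)},\dots,T^{(4)}$ are purely type 2B (at most one is impure) and so at least two of those have index $\ge2$; taking those two as $A<B$, the interior slot immediately to the left of $A$ supplies the ``outside'' type 1 interval, and any interior slot between $A$ and $B$ supplies the other. If $q=3$ with a nonempty outer slot, take any two of the (at least two) purely type 2B intervals among $T^{(1)},T^{(2)},T^{(3)}$ as $A<B$, and use that outer slot --- left of $T^{(1)}$ hence left of $A$, or right of $T^{(3)}$ hence right of $B$ --- together with the interior slot between $A$ and $B$. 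The step needing the most care is this final case analysis: one must verify that the ``between'' and ``outside'' type 1 intervals are simultaneously available and correctly ordered relative to $A$ and $B$ in column position, which is exactly what the slot count, the bound $q-1\ge2$, and the case split on $q$ guarantee. The global type 2A block never enters the argument, as it lies inside at most one of the $T^{(j)}$, which we simply avoid as a choice of $A$ or $B$.
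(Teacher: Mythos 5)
Your proof is correct and follows essentially the same route as the paper's: row $1$ is handled by Lemma~\ref{lemma:row1}, Lemma~\ref{lemma:2A} confines the type 2A vertices to a single maximal type 2 interval, and the contradiction for $l>1$ comes from exhibiting four vertices alternating between type 1 and type 2B, which Lemma~\ref{lemma:rowl} forbids. Your slot-counting case analysis simply makes explicit the step the paper asserts without detail (that among the remaining intervals ``we can always find four vertices'' with the alternating pattern).
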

\begin{proof}
When $l=1$, there are at most two maximal intervals of type 1 and two maximal intervals of type 2, according to Lemma~\ref{lemma:row1}. For $l>1$, we consider two different cases, depending on whether row $l$ contains any type 2A vertex or not. If row $l$ does not contain any vertex of type 2A, then it can contain at most two maximal intervals of type 2B, according to Lemma~\ref{lemma:rowl}. If row $l$ contains at least one type 2A vertex, then all type 2A vertices form a maximal interval of type 2A, according to Lemma~\ref{lemma:2A}. Suppose there are at least three maximal intervals of type 1 and three maximal intervals of type 2 in row $l$. We already know one maximal interval of type 2 contains all type 2A vertices in that row. For the remaining five intervals, we can always find four vertices, $v_{l,i}, v_{l, j}, v_{l, i'}$, and $v_{l, j'}$ with $i < j < i' < j'$, that alternate between type 1 and type 2B, which contradicts Lemma~\ref{lemma:rowl}.
\end{proof}
\noindent Consider the example in Figure~\ref{fig:ucfgL}. Vertices in row $l$ present in Figure~\ref{fig:ucfgL} are divided into two types: $v_{l,i_2},v_{l,i_3}$, and $v_{l,i_6}$ are of type 1, and vertices $v_{l,i_1},v_{l,i_4},v_{l,i_5}$, and $v_{l,i_7}$ are of type 2. Row $l$ contains only two maximal intervals of type 1.

Now with all the gadgets and lemmas introduced above, we are ready to prove Proposition~\ref{prop:CFGLInflow}.
\begin{proof} [Proof of Proposition~\ref{prop:CFGLInflow}]
Given an arc $a=(u,v)\in A$ and an extreme point $f$, we first construct a spanning tree $\bT_f$ with Property~\ref{property:accessible} by Algorithm~\ref{alg:spanning}. Then the flow over arc $a$
\begin{subequations}
\begin{align}
f_a &= \sum_{v_{l,t} \in \bT_{f,1}} b_{l,t} \\
& =  \sum_{v_{1,t} \in \bT_{f,1}} b_{1,t} + \sum_{l=2}^L\sum_{v_{l,t} \in \bT_{f,1}} b_{l,t}, \label{eq:flowa1}
\end{align}
\end{subequations}
by~\eqref{eq:flow:a}. By Lemma~\ref{lemma:row1}, the first term in~\eqref{eq:flowa1}, $\sum_{v_{1,t} \in \bT_{f,1}} b_{1,t}$, equals $\sum_{i \le t \le j} b_{1,t}$ or $\sum_{1 \le t \le i} b_{1,t} + \sum_{j \le t \le T} b_{1,t}$, for some $i,j\in [T]$ with $i\le j$. Thus the term $\sum_{v_{1,t} \in \bT_{f,1}} b_{1,t}$ takes $O(T^2)$ values in all extreme points of $P_F$, and these values can be enumerated in $O(T^2)$ time. For the second term in~\eqref{eq:flowa1}, by Lemma~\ref{lemma:rowl}, $\sum_{v_{l,t} \in \bT_{f,1}}b_{l,t}$ equals either $\sum_{i_1 \le t \le j_1} b_{l,t} + \sum_{i_2 \le t \le j_2} b_{l,t}$ or $\sum_{t\in [T]} b_{l,t}-(\sum_{i_1 \le t \le j_1} b_{l,t} + \sum_{i_2 \le t \le j_2} b_{l,t})$ for some $i_1, j_1, i_2, j_2 \in [T]$ with $i_1\le j_1 \le i_2 \le j_2$. Thus the term $\sum_{v_{l,t} \in \bT_{f,1}} b_{l,t}$ takes $O(T^4)$ values, and these values can be enumerated in $O(T^4)$ time.

Therefore, the flow $f_a$ takes $O(T^2)\times O(T^{4(L-1)}) = O(T^{4L-2})$ values in all extreme points of $P_F$, and these values can be enumerated in $O(T^{4L-2})$ time.
\end{proof}

\subsection{NP-hard cases}
Each condition in Corollary~\ref{cor:mfgl} is crucial for the U-MFG to be polynomially solvable. As shown below, the U-MFG becomes NP-hard if any of those conditions is relaxed.

\begin{proposition}\label{thm:CFGhard} The U-MFG with sources in one row and a varying number of rows is NP-hard.
\end{proposition}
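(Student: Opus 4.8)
The plan is to reduce from the \emph{rectilinear Steiner arborescence} problem (RSA), which Shi and Su proved to be NP-complete. An RSA instance consists of terminals $p_1,\dots,p_m$ with non-negative integer coordinates (which may be assumed polynomially bounded, with the origin not a terminal) and a budget $W$; the question is whether there is a tree in the integer grid, rooted at the origin and containing every $p_i$, of total edge length at most $W$, such that every path from the root to a terminal is monotone, i.e.\ non-decreasing in both coordinates. The observation that drives the reduction is that such a monotone arborescence is exactly an out-arborescence of the directed grid $G$: identifying a point $(x,y)$ with the vertex $v_{y+1,\,x+1}$ maps the origin to the corner $v_{1,1}$ and the two permitted directions of motion to the forward and downward arcs of $G$.

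Given an RSA instance, I would construct the U-MFG instance as follows. Set $T=X_{\max}+1$ and $L=Y_{\max}+1$, where $X_{\max},Y_{\max}$ are the largest coordinates that occur, so $G$ has size polynomial in $m$, although $L$ is not a constant. Place a single source at $v_{1,1}$ with supply $m$, and a sink of demand $1$ at the vertex corresponding to each terminal $p_i$; give \emph{every} arc the concave fixed-charge cost $c_a(x)=1$ for $x>0$ and $c_a(0)=0$, and let all capacities be $+\infty$. Then $K=0$, so this is a U-MFG, and its only source lies in row one. The instance is feasible because $v_{1,1}$ is the top-left corner and reaches every vertex along a monotone path. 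I would then prove that the optimal value of this U-MFG equals the minimum length of a rectilinear Steiner arborescence for the RSA instance, which gives NP-hardness.

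The direction ``optimum $\le$ minimum RSA length'' is routine: any rectilinear Steiner arborescence $R$, pruned of branches containing no terminal, is a subtree of $G$ rooted at $v_{1,1}$, and sending along each arc of $R$ a flow equal to the number of terminals in its subtree yields a feasible flow whose support is $R$ and whose cost is $|A(R)|$. For the reverse direction I would invoke the two facts already invoked above: a concave objective attains its minimum over $P_F$ at an extreme point~\cite{rockafellar2015convex}, and by Proposition~\ref{prop:cyclefree} an extreme point is a cycle-free solution, so its positive-flow arcs (which, all capacities being infinite, are precisely the free arcs) form a forest. Since $v_{1,1}$ is the unique source, the flow decomposes into directed paths emanating from $v_{1,1}$, so every positive-flow arc lies on such a path; hence the positive-flow arcs form a single tree $\tau$ containing $v_{1,1}$ that is an out-arborescence rooted there, and $\tau$ reaches every sink because each has positive net inflow. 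Thus an optimal flow has cost $|A(\tau)|$ for a monotone out-arborescence $\tau$ of $G$ spanning the terminals, which is at least the minimum RSA length.

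I expect the ``$\ge$'' step to be the main obstacle: one must verify carefully, using only cycle-freeness, reachability from the single source, and the acyclicity of $G$, that the support of a single-source extreme point really is an out-arborescence rooted at $v_{1,1}$, so that its cost is exactly the number of its arcs; and one must confirm that the NP-hardness instances of RSA can be taken with polynomially bounded coordinates, so that $G$ has polynomial size. I would close by noting that this reduction necessarily produces a grid with $L=Y_{\max}+1$ rows, which grows with the instance — a growth that is unavoidable in view of Corollary~\ref{cor:mfgl}, and that is precisely what places the construction in the ``varying $L$'' regime.
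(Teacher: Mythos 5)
Your reduction is correct, but it takes a genuinely different route from the paper, which reduces from the \emph{partition} problem: the paper builds an $(n{+}1)\times(n{+}1)$ grid with two sources in row one and $n$ sinks on the diagonal, each sink $v_{i+1,i+1}$ carrying demand $y_i$ and a unit fixed charge on its two incoming arcs, so that cost $n$ is achievable iff the $y_i$ split evenly. Your reduction from the rectilinear Steiner arborescence problem instead makes the single-source, unit-fixed-charge U-MFG on a grid literally coincide with RSA. The step you flag as the main obstacle does go through: with a single source and all capacities infinite, flow decomposition in the acyclic grid shows every positive-flow arc lies on a directed path from $v_{1,1}$, so the support of a cycle-free solution (Proposition~\ref{prop:cyclefree}) is connected through $v_{1,1}$; and if some vertex had two positive incoming arcs, the two directed $v_{1,1}$-paths ending in them would close an undirected cycle in the support, a contradiction, so the support is an out-arborescence whose leaves are sinks. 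The other caveats also check out: the Shi--Su hardness instances live on a polynomial-size grid, and the Hanan-grid property lets you assume an optimal arborescence sits on the integer grid, so length equals arc count. What each approach buys: the paper's partition reduction is elementary and entirely self-contained, but partition is only weakly NP-hard and the resulting demands are the input integers, so it yields ordinary NP-hardness; your reduction outsources the combinatorial difficulty to the (nontrivial) Shi--Su theorem, but in exchange produces instances with unit demands and a polynomially bounded grid, hence \emph{strong} NP-hardness, and it exposes a clean structural equivalence between single-source fixed-charge grid flows and rectilinear Steiner arborescences. Either argument establishes Proposition~\ref{thm:CFGhard}.
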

\begin{proof} Our proof is based on a reduction from the partition problem~\cite{garey1979computers}. Given a partition instance with a set of $n$ integers $\{y_1,\ldots,y_n\}$, we construct a U-MFG instance with $n+1$ rows, $n+1$ columns, two sources in row one, and $n$ sinks, as shown in Figure~\ref{fig:CFGLNPhard}. The grid has two sources ($v_{1,1}$ and $v_{1,2}$, each with supply $\sum_{i \in [n]}{y_i}/2$) and $n$ sinks ($v_{i+1,i+1}$ with demand $y_i$ for each $i\in [n]$). The cost over each incoming arc of sink $v_{i+1,i+1}$ ($(v_{i+1,i},v_{i+1,i+1})$ or $(v_{i,i+1}, v_{i+1,i+1})$) is 1 for sending nonzero flow and 0 otherwise for each $i\in [n]$. Each of the remaining arcs in Figure~\ref{fig:CFGLNPhard} has zero cost of sending any flow. 
\begin{figure}[htb]
    \centering
    \includegraphics[scale=0.6]{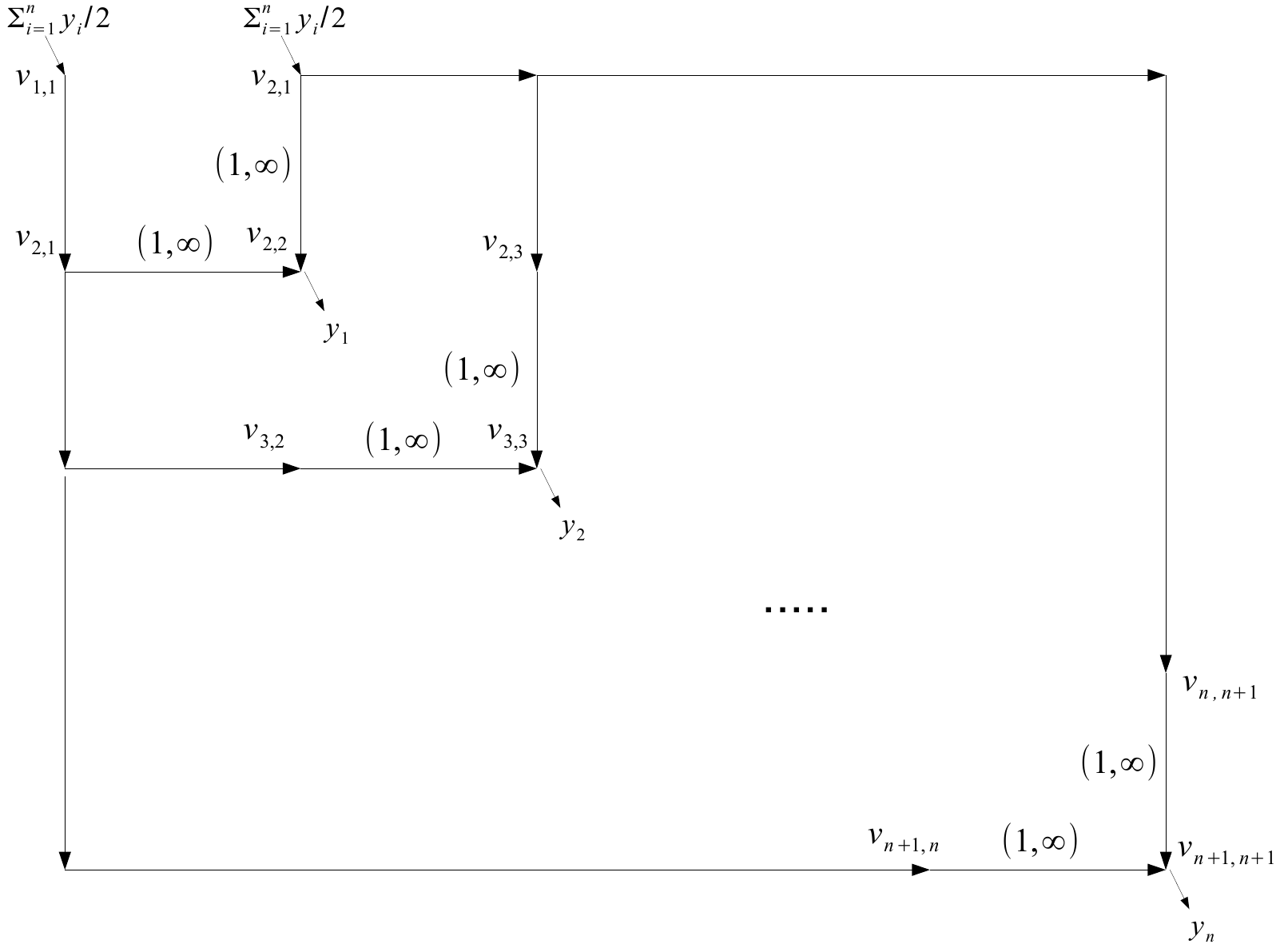}        
    \caption{The MFG instance with varying $L$}
    \label{fig:CFGLNPhard}
\end{figure}

We claim that the partition instance is a yes instance if and only if the optimal objective value of the constructed U-MFG instance is $n$. First notice the optimal objective value of the U-MFG instance is at least $n$, since at least one incoming arc of sink $v_{i+1,i+1}$ needs to carry nonzero flow, for each $i\in [n]$. If there exists a subset $S\subseteq [n]$ such that $\sum_{i\in S} y_i =\sum_{i \in [n]}{y_i}/2$, then we can create a feasible flow in the U-MFG instance as follows: send flow $y_i$ over the forward arc $(v_{i+1,i}, v_{i+1,i+1})$ to satisfy the demand $y_i$ for $i\in S$, and send flow over the downward arc $(v_{i,i+1}, v_{i+1,i+1})$ for each $i \notin S$. The flow value in each remaining arc can be computed by the flow conservation constraints. This feasible flow has a cost of exactly $n$. On the other hand, suppose $f^*$ is an optimal flow of the U-MFG instance with cost $n$. Then exactly one incoming arc of sink $v_{i+1,i+1}$ has nonzero flow, for each $i\in [n]$. Let $S$ be the set of indices $i\in [n]$ such that the flow over arc $(v_{i+1,i}, v_{i+1,i+1})$ in $f^*$ is strictly greater than 0. Then by the flow conservation constraints, $\sum_{i\in S}y_i = \sum_{i\in [n]} y_i/2$.
\end{proof}

\begin{proposition} \label{thm:cfg3NPhard}
The U-MFG with a constant number of rows and both sources and sinks in at least two rows is NP-hard.
\end{proposition}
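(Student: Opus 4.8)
\noindent The plan is to reduce from the partition problem: given positive integers $y_1,\dots,y_n$ with $\sum_i y_i = 2S$, decide whether some subset sums to $S$. I will build a U-MFG instance on a grid with a small constant number of rows (three should suffice) and $\Theta(n)$ columns, with two sources placed in two \emph{different} rows and with sinks occupying two different rows, so that the instance satisfies the hypotheses of the proposition and has size polynomial in $n$. The point of contrast with Proposition~\ref{thm:CFGhard} is that there the $n$ item sinks sat on the diagonal (so $L$ grew with $n$); here they must be ``folded'' into constantly many rows.

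\noindent Concretely, I would put source $A$ near the top-left corner and source $B$ near the bottom-left corner of the grid (so the sources lie in rows $1$ and $L$), with supplies chosen so that $A$ must discharge exactly $S$ units ``toward the item sinks'' while $B$ must discharge exactly $S$ units as well; I place one item sink per $y_i$ (with demand $y_i$) spread along the columns, together with one auxiliary sink placed in another row so that the sinks occupy two rows and total supply equals total demand. Each item sink $w_i$ is a grid vertex with exactly two incoming arcs, and I put a fixed charge of $1$ (cost $1$ for any positive flow, $0$ otherwise) on each of these two arcs; all arcs that ``should not'' be used carry a prohibitively large cost, and the rest carry cost $0$. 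The target behaviour is that in a min-cost flow every item sink is fed entirely through one of its two incoming arcs, either by $A$'s stream or by $B$'s stream; since $B$ carries exactly $S$ units, the items fed by $B$ must sum to $S$, i.e.\ form a partition, and the optimum is then exactly $n$. Conversely every partition produces such a flow of cost $n$; and if no partition exists, some item sink is necessarily fed through both of its incoming arcs, forcing cost at least $n+1$. That ``$A$ feeds some sinks and $B$ feeds the complement'' is the only cheap option is enforced by planarity: $A$'s flow path and $B$'s flow path to a crossing pair of sinks cannot be vertex-disjoint by Lemma~\ref{lem:intersect}, so the $A$-fed and $B$-fed item sets must be ``separated'' along the boundary of the outer face, and hence their demands split as $S+S$.

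\noindent I expect the main obstacle to lie in the hard direction, and within it a subtle layout issue: one must rule out ``degenerate'' cheap flows in which, say, $A$'s stream leaks into $B$'s corridor early and then reaches later item sinks along it, so that a sink's incoming arc is kept active by a stream that does not genuinely ``feed'' it, which could in principle realise cost $n$ without corresponding to a partition. I would neutralise this by reinforcing the gadget --- for instance routing $B$'s through-traffic on a dedicated row or edge set disjoint from every item sink's \emph{other} incoming arc, or inserting a fixed-charge ``toll'' arc (or a tiny per-unit cost) just past each item column that any re-routed unit must pay --- so that every flow of cost at most $n$ is forced into the intended form ``$A$ descends to its sinks, $B$ sweeps to its sinks''. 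The remaining steps --- verifying balance of supplies and demands, that every partition yields a feasible flow of cost exactly $n$, and that the construction has $O(n)$ size --- are routine.
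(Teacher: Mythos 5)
Your reduction template (fixed charges on the two incoming arcs of each item sink, so that a cost-$n$ flow selects one feeder per sink) is the right family, but the specific two-source mechanism you propose cannot work in a grid of constant height, and the planarity argument you invoke certifies the failure rather than the correctness. All arcs of $G$ point rightward or downward, so a source placed near the bottom-left corner can only reach sinks weakly below and to its right; if the item sinks sit in its row, its stream must traverse that row and hence pass through the left-incoming (charged) arc of every intermediate item sink on the way to any later sink it feeds, and there is no lower row available for the ``dedicated bypass'' you suggest, nor can flow return upward to use one. More fundamentally, your own appeal to Lemma~\ref{lem:intersect} shows that the $A$-fed and $B$-fed sink sets would have to be non-interleaved, i.e.\ essentially contiguous blocks along the boundary. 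But then a cost-$n$ flow exists iff some \emph{contiguous} block of the $y_i$ sums to $\sum_i y_i/2$, a condition decidable in polynomial time; in particular the forward direction of your reduction fails, since an arbitrary subset $I\subseteq[n]$ with $\sum_{i\in I}y_i=\sum_i y_i/2$ need not be realizable as a cost-$n$ flow. Any correct construction must allow an \emph{arbitrary} subset of item sinks to be selected, which is exactly what two competing global streams cannot do here (and is why Proposition~\ref{thm:CFGhard} lets $L$ grow so the sinks can sit on a diagonal).

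The paper's proof escapes this by using $n+1$ sources rather than two: a single global source $v_{1,1}$ of supply $\sum_i y_i/2$ in row one, an item sink $v_{2,2i}$ of demand $y_i$ in row two for each $i$, a \emph{local} big source $v_{2,2i-1}$ of supply $D>\sum_i y_i$ immediately to the left of each item sink, and one large sink $v_{3,2n}$ in row three absorbing the excess $nD-\sum_i y_i/2$. Each item sink is fed either from above (out of the global budget) or from its own adjacent local source, whose surplus drains straight down to row three; because the ``second stream'' is local to each column rather than a single sweep, every subset $I$ with $\sum_{i\in I}y_i=\sum_i y_i/2$ yields a flow of cost exactly $n$, and conversely a cost-$n$ flow forces exactly one active incoming arc per item sink and hence such a subset. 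If you want to keep a genuinely two-source construction you would need upward or backward arcs (compare Section~\ref{sec:extension}); within the forward/downward grid of constant height, the local-source gadget, or something equivalent to it, is needed.
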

\begin{proof}
Our proof is also based on a reduction from the partition problem. Given a partition instance with a set of $n$ integers $\{y_1,\ldots,y_n\}$, we construct a U-MFG instance with three rows and $2n$ columns, as shown in Figure~\ref{fig:CFG3NPhard}. Let $D$ be an integer larger than $\sum_{i \in [n]}{y_i}$, say, $2\sum_{i \in [n]}{y_i}$. The grid has $n+1$ sources ($v_{1,1}$ with supply $\sum_{i\in [n]} y_i/2$ and $v_{2,2i-1}$ with supply $D$ for each $i\in [n]$) and $n+1$ sinks ($v_{2,2i}$ with demand $y_i$ for each $i\in [n]$ and $v_{3,2n}$ with $nD-\sum_{i\in [n]} y_i/2$). The cost over each incoming arc of sink $v_{2,2i}$ ($(v_{2,2i-1},v_{2,2i})$ or $(v_{1,2i}, v_{2,2i})$) is 1 for sending nonzero flow and 0 otherwise, for each $i\in [n]$. Each of the remaining arcs in Figure~\ref{fig:CFG3NPhard} has zero cost of sending any flow. 
\begin{figure}[htb]
    \centering
    \includegraphics[scale=0.75]{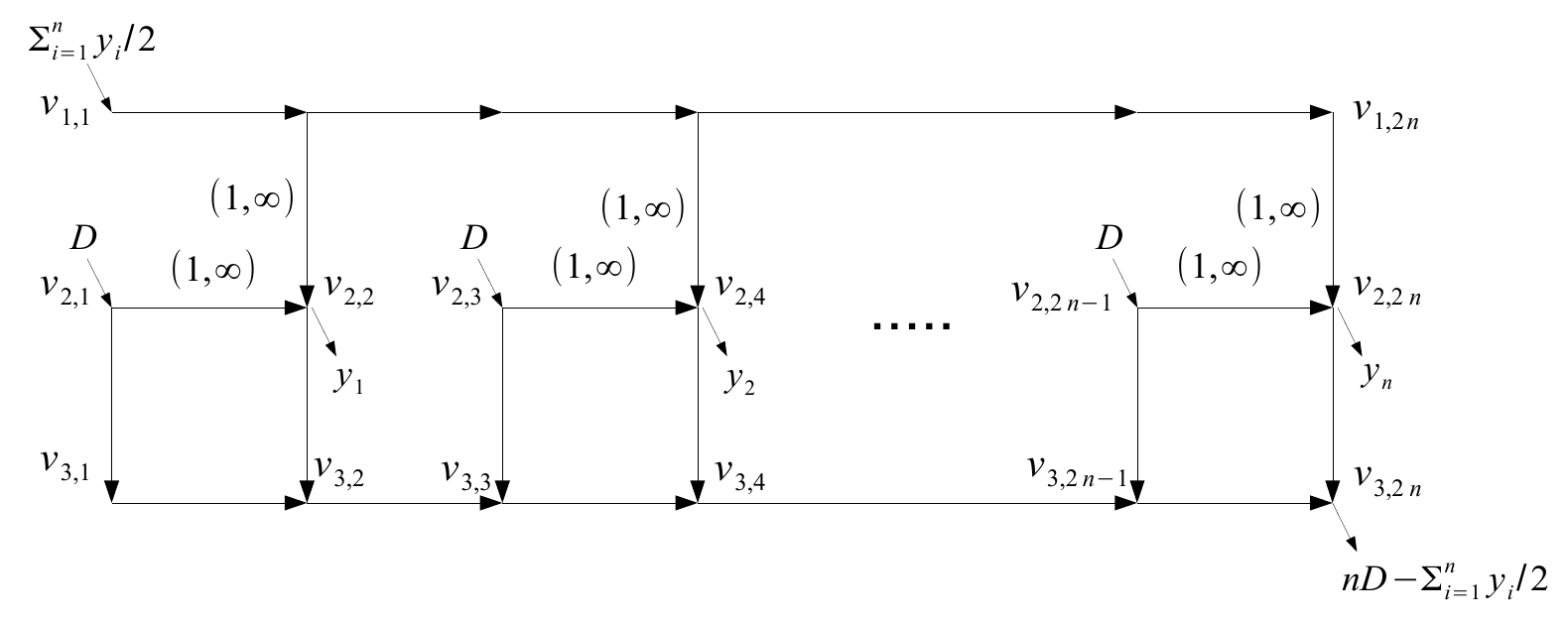}        
    \caption{The MFG instance with both sources and sinks in two rows}
    \label{fig:CFG3NPhard}
\end{figure}
It can be seen that the partition instance is a yes instance if and only if the optimal objective value of the constructed U-MFG instance is $n$, following a similar argument as in the proof of Proposition~\ref{thm:CFGhard}.
\end{proof}

\section{Extensions to grids with additional arcs} \label{sec:extension}
First notice the grid with only backward arcs (instead of forward arcs) or upward arcs (instead of downward arcs) is isomorphic to the grid in Figure~\ref{fig:grid}, so the minimum-concave-cost flow problem over this type of grid can be solved in the same way as the grid with only forward and downward arcs. We now consider the grid with both backward and forward arcs and/or both upward and downward arcs. Let us call the MFG with both back and upward arcs the MFG-BU.

The reformulation and algorithm in Section~\ref{sec:DP} can be easily modified for the MFG-BU, but the computational complexity results in Section~\ref{sec:CFG} and Section~\ref{sec:UCFG} vary across different cases. We illustrate the differences below.

First the dynamical system reformulated from the MFG-BU will include flow over backward arcs into states and flow over upward arcs into decision variables. In particular, the state $\BFs^t$ at stage $t$ becomes a $2L$-dimensional vector whose first $L$ components denote the flows over forward arcs $(v_{l,t},v_{l,t+1})$ and last $L$ components denote the flows over backward arcs $(v_{l,t+1},v_{l,t})$, for $l\in [L]$; the decision variable $\BFu^t$ at stage $t\in [T]$ becomes a $2(L-1)$-dimensional vector whose first $(L-1)$ components denote the flows over downward arcs $(v_{l,t+1},v_{l+1,t+1})$ and last $(L-1)$ components denote the flows over upward arcs $(v_{l+1,t+1},v_{l,t+1})$, for $l\in [L-1]$.	The system equations and cost calculation need to be modified accordingly. The algorithm for the MFG-BU remains almost the same as Algorithm~\ref{alg:mfg}, with the only change being to include costs over backward and upward arcs in the calculation of arc costs; its complexity is slightly higher than that in Theorem~\ref{prop:runtime} (up to a constant factor in the exponent), and the analysis is similar.

Theorem~\ref{thm:ccfgl_p} no longer holds for the capacitated MFG-BU. The key reason is that we cannot assume sources and sinks are in row one and row $L$ (or the boundary of the outer face of the grid) any more, due to the existence of upward arcs. Thus Lemma~\ref{lem:intersect} cannot be applied, and the sources and sinks in subtree $\bT_{f,1}$ do not appear consecutively. Corollary~\ref{cor:ccfg} still holds though for the capacitated MFG-BU if there is no upward arc in the grid. Similarly, the uncapacitated MFG-BU with no upward arcs, sources in one row and a constant number of rows is still polynomially solvable. It becomes NP-hard when the grid contains upward arcs, as shown below.

\begin{proposition}
The U-MFG with upward arcs and three rows is NP-hard.
\end{proposition}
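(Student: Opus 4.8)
The plan is to reduce from the \textbf{partition problem}, as in Propositions~\ref{thm:CFGhard} and~\ref{thm:cfg3NPhard}, and to exhibit a hard instance that is as restrictive as possible — three rows, all sources in a single row, and all capacities infinite — so that the proposition genuinely shows that adding upward arcs destroys the polynomial-time guarantee of Corollary~\ref{cor:mfgl}. Given a partition instance $\{y_1,\dots,y_n\}$ with $\sum_{i\in[n]}y_i=2S$, I would construct a three-row, $n$-column grid with forward, downward, \emph{and upward} arcs in which row two carries all of the supply. Concretely: make $v_{2,i}$ a source of supply $y_i$ for each $i\in[n]$ (so every source lies in row two); make $v_{1,n}$ a sink of demand $S$ and $v_{3,n}$ a sink of demand $S$ (the net supply is $0$); for each $i$ put a fixed charge of $1$ on the upward arc $(v_{2,i},v_{1,i})$ and on the downward arc $(v_{2,i},v_{3,i})$ (cost $1$ for any nonzero flow, $0$ otherwise); give cost $0$ to every forward arc inside row one and inside row three; and give every other arc a cost so large that it is never used in an optimal flow. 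A figure would accompany the proof, as in the other hardness proofs. The key feature of the gadget is that the $y_i$ units leaving source $v_{2,i}$ can escape only by going \emph{up} into row one or \emph{down} into row three, after which they are funneled forward to the collector sink $v_{1,n}$ or $v_{3,n}$ respectively.

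The correctness proof then has the usual two directions. For soundness, from a subset $T\subseteq[n]$ with $\sum_{i\in T}y_i=S$, route item $i$'s supply up if $i\in T$ and down otherwise; exactly $n$ fixed charges are paid, and no optimal flow can do better than $n$, since it must push each source's positive supply through one of that source's two cheap outgoing arcs. For completeness, in a flow of cost at most $n$ each source $v_{2,i}$ activates exactly one of its two charged arcs — it cannot activate both (that already costs $2$ for that $i$) and cannot activate neither — so setting $T=\{i:(v_{2,i},v_{1,i})\text{ carries flow}\}$, flow conservation at $v_{1,n}$ forces $\sum_{i\in T}y_i=S$, hence $\sum_{i\notin T}y_i=S$, a valid partition. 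Thus the partition instance is a yes-instance iff the constructed instance has optimal value at most $n$, and the problem is NP-hard.

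The only real obstacle is the bookkeeping needed to justify that flow cannot bypass the binary choice: one has to argue that an optimal flow uses no expensive arc, hence that flow leaving a row-two source is confined to the two designated arcs, that flow entering row one (row three) cannot descend (ascend) again to merge item streams, and that flow cannot loiter forward inside row two in order to relocate its up/down decision. Each of these is immediate once the forbidden arcs are made expensive enough, but it is worth emphasizing that these are exactly the places where the upward arcs are used — indeed, without upward arcs this instance is infeasible (the demand at $v_{1,n}$ could never be served), which is precisely why the polynomial-time result for the upward-arc-free case does not carry over. An alternative is to fold the three-row construction of Proposition~\ref{thm:cfg3NPhard}, merging its two source rows via a column of upward arcs; the direct construction above is cleaner to verify.
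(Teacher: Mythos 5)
Your reduction is correct and is essentially the paper's argument in mirror image: the paper also reduces from partition over a three-row grid with a unit fixed charge on each item's two ``choice'' arcs and tests whether the optimum equals $n$, except that it places the $n$ item-sinks in row two and feeds them from two half-supply streams routed along rows one and three, whereas you place the item-sources in row two and let each choose between two collector sinks. Both constructions use the upward arcs in the same essential way (to let flow cross into the middle row from below, resp.\ out of it upward), so no further comparison is needed.
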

\begin{proof}
Our proof is based on a reduction from the partition problem. Given a partition instance with a set of $n$ integers $\{y_1,\ldots,y_n\}$, we construct a U-MFG instance with three rows and $n+1$ column, as shown in Figure~\ref{fig:CFGUNPhard}. The grid has two sources ($v_{1,1}$ with supply $\sum_{i\in [n]}y_i/2$ and $v_{1,2}$ with supply $\sum_{i\in [n]} y_i/2$) and $n$ sinks ($v_{2,i+1}$ with demand $y_i$ for each $i\in [n]$). The cost over each incoming arc of sink $v_{2,i+1}$ ($(v_{1,i+1},v_{2,i+1})$ or $(v_{3,i+1}, v_{2,i+1})$) is 1 for sending nonzero flow and 0 otherwise, for each $i\in [n]$. Each of the remaining arcs in Figure~\ref{fig:CFGUNPhard} has zero cost of sending any flow. 
\begin{figure}[htb]
    \centering
    \includegraphics[scale=0.7]{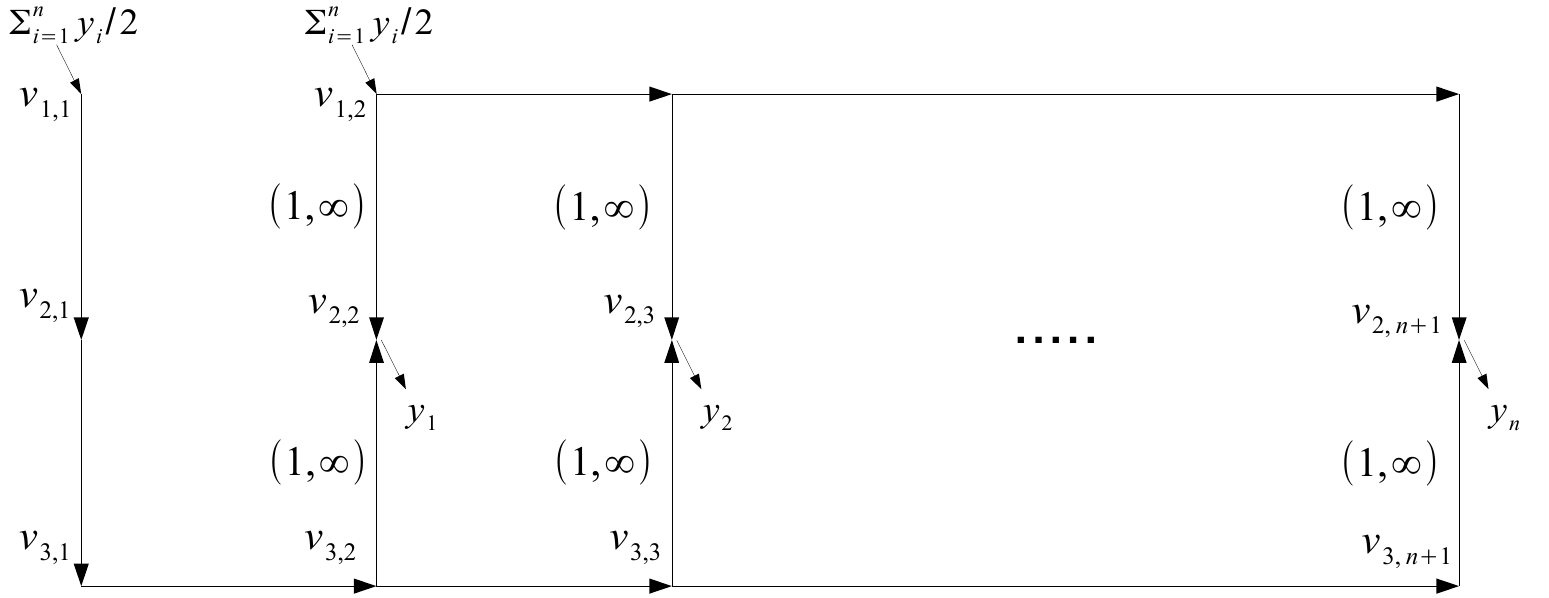}        
    \caption{The MFG instance with upward arcs}
    \label{fig:CFGUNPhard}
\end{figure}
It can be seen that the partition instance is a yes instance if and only if the optimal objective value of the constructed U-MFG instance is $n$, following a similar argument as in the proof of Proposition~\ref{thm:CFGhard}.
\end{proof}

We can even extend our polynomially solvable results to grids with diagonal arcs of the form $(v_{l,t}, v_{l+1,t+1})$ or $(v_{l,t+1}, v_{l+1,t})$, as long as adding these arcs keeps the grid planar. Then Lemma~\ref{lem:intersect} still holds, and the reformulation and analysis of the algorithm can be done in a similar fashion as the inclusion of backward arcs. Finally, throughout the paper we studied the computational complexity of the MFG when the number of \emph{columns} in the grid, $T$, varies. By symmetry, if we let the number of rows $L$ varies and put the corresponding conditions on columns (such as all sources are in one column), all results hold accordingly.

\section{Conclusions} \label{sec:conclude}
We establish a full characterization of the computational complexity of the minimum-concave-cost flow problem in a two-dimensional grid, based on the number of rows of the grid, the number of different capacities over all arcs, and the location of sources and sinks. We develop polynomial-time algorithms for the problem under several general conditions, by exploiting the fact of the grid being planar and the combinatorial structure underlying the extreme points of the flow polyhedron. We also complement these results with hardness results when any of the conditions is relaxed. Our results answer several open questions raised in the lot sizing and supply chain literature. An interesting open problem left is the computational complexity of the C-MFG with constant $K$, varying $L$, and sources and sinks in at most two rows, which we conjecture to be NP-hard.
\bibliographystyle{abbrv}
\bibliography{CFG}   
\end{document}